\newtheorem{theorem}{Theorem}[section]
\newtheorem{lemma}[theorem]{Lemma}%[section]
\newtheorem{example}[theorem]{Example}%[section]
\newtheorem{remark}[theorem]{Remark}%[section]
\newtheorem{proposition}[theorem]{Proposition}
\newcommand{\Sh}{\mbox{\rm Sh}}
\newcommand{\RIP}{\mbox{\rm RIP}}
\begin{document}
\title[Quantum Observables and Effect Algebras]{Quantum Observables and Effect Algebras}
\author[Anatolij Dvure\v{c}enskij]{Anatolij Dvure\v{c}enskij$^{1,2}$}
\date{}%
\maketitle
\begin{center}  \footnote{Keywords: Effect algebra,  lattice effect algebra, monotone $\sigma$-complete effect algebra, observable, spectral resolution, homogeneous effect algebra, compatibility, orthoalgebra, orthoalgebraic skeleton

 AMS classification: 81P15, 03G12, 03B50, 06C15

The paper has been supported by the grant VEGA No. 2/0069/16 SAV
 and GA\v{C}R 15-15286S. }
Mathematical Institute,  Slovak Academy of Sciences\\
\v Stef\'anikova 49, SK-814 73 Bratislava, Slovakia\\
$^2$ Depart. Algebra  Geom.,  Palack\'{y} Univer.\\
17. listopadu 12, CZ-771 46 Olomouc, Czech Republic\\

E-mail: {\tt
dvurecen@mat.savba.sk}
\end{center}

\begin{abstract}
We study observables on monotone $\sigma$-complete effect algebras. We find conditions when a spectral resolution implies existence of the corresponding observable. The set of sharp elements of a monotone $\sigma$-complete homogeneous effect algebra is a monotone $\sigma$-complete subalgebra. In addition, we study compatibility in orthoalgebras.
\end{abstract}

\section{Introduction}%1

The notion of an observable is a basic mathematical tool for study of quantum mechanical measurements. It is well known that quantum mechanical events do not satisfy axioms of the Kolmogorov axiomatical model, see \cite{BiNe, Var}, and rather have a more general structure. Mathematical foundations of quantum mechanics use many important algebraic structures, like orthomodular lattices, orthomodular posets, or orthoalgebras.  The most important ones of them are two structures connected with a Hilbert space which model the so-called Hilbert space quantum mechanics. So, if $H$ is a real, complex or quaternionic Hilbert space, let $\mathcal L(H)$ be the system of closed subspaces of $H$, then $\mathcal L(H)$ is a complete orthomodular lattice which is isomorphic to the set $\mathcal P(H)$ of all orthogonal projections on $H$. Observables in this structures correspond to orthogonal projector-valued measures, or equivalently, to symmetric operators. These structures describe so-called yes-no events.  A more general structure is the system $\mathcal E(H)$ of Hermitian operators on $H$ that are between the zero operator $O$ and the identity operator $I$. It describes yes-no events as well as so-called fuzzy events, and observables correspond to POVMs (positive operator valued measures). Of course, $\mathcal P(H) \subset \mathcal E(H)$.

Effect algebras have appeared in the beginning of the Nineties, see \cite{FoBe}. They combine both sharp and unsharp modeling of quantum events. For them observables are defined as  $\sigma$-homomorphisms from the Borel $\sigma$-algebra into a monotone $\sigma$-complete effect algebra. Today there is a reach literature on observables, see e.g. \cite{DvPu, JPV1, JPV2, DvKu, CIJTP}, and others. Recently the so-called Olson order of observables was introduced, \cite{286}, which was inspired by an order of operators presented by Olson, see \cite{Ols}.

There are two very important classes of effect algebras: The first class consists of effect algebras with the Riesz decomposition property (RDP for short), i.e. a joint refinement of any two decompositions is possible. The second class is formed by homogeneous effect algebras, introduced in \cite{Jen}. It contains the first class as well as orthoalgebras and lattice ordered effect algebras. These effect algebras will be central in our reasoning.

The aim of the present paper is to extend our knowledge on observables. In particular, we show a one-to-one relation between observables and their spectral resolutions, or we describe orthoalgebraic skeleton of homogeneous monotone $\sigma$-complete effect algebras.

The paper is organized as follows. Section 2 gathers the basic notions of the theory of effect algebras. Section 3 defines spectral resolutions. We show when a spectral resolution completely determines an observable. In Section 4, we describe the set of sharp elements of a monotone $\sigma$-complete homogeneous effect algebra; sharp elements will form the orthoalgebraic skeleton of an effect algebra. We note that $\mathcal P(H)$ is the orthoalgebraic skeleton of $\mathcal E(H)$. In addition, we show when effect algebras satisfy the Observable Existence Property.

\section{Elements of Effect Algebras}%2

According to \cite{FoBe}, we say that an {\it effect algebra} is a partial algebra $E = (E;+,0,1)$ with a partially defined operation $+$ and with two constant elements $0$ and $1$  such that, for all $a,b,c \in E$,
\begin{enumerate}

\item[(i)] $a+b$ is defined in $E$ if and only if $b+a$ is defined, and in
such a case $a+b = b+a;$

 \item[(ii)] $a+b$ and $(a+b)+c$ are defined if and
only if $b+c$ and $a+(b+c)$ are defined, and in such a case $(a+b)+c
= a+(b+c)$;

 \item[(iii)] for any $a \in E$, there exists a unique
element $a' \in E$ such that $a+a'=1$;

 \item[(iv)] if $a+1$ is defined in $E$, then $a=0$.
\end{enumerate}

If we define $a \le b$ if and only if there exists an element $c \in
E$ such that $a+c = b$, then $\le$ is a partial ordering on $E$, and
we write $c:=b-a$. It is clear that $a' = 1 - a$ for all $a \in E$, and if $a\le b$, then $b-a=(b'+a)'$. For more information about effect algebras, see \cite{DvPu}.  A {\it homomorphism}  from an effect algebra $E_1$ into another one $E_2$ is any mapping $h: E_1 \to E_2$ such that  (i) $h(1)=1$ (ii) if $a+b$ is defined in $E_1$ so is defined $h(a)+h(b)$ in $E_2$ and $h(a+b)= h(a)+h(b)$. A subset $F$ of an effect algebra $E$ is a {\it sub-effect algebra} of $E$ if (i) $0,1 \in F,$ (ii) if $a\in F,$ then $a'\in F$, and (iii) if $a,b\in F$ and $a+b \in E$, then $a+b \in F$.

If $G$ is an Abelian partially ordered group written additively, choose an element $u \in G^+:=\{g \in G \colon g \ge 0\}$, and set $\Gamma(G,u):=[0,u]=\{g \in G: 0 \le g \le u\}$. Then $(\Gamma(G,u);+,0,u)$ is an effect algebra, where $+$ is the group addition of elements from $\Gamma(G,u)$  if it exists in $\Gamma(G,u)$. Effect algebras that are isomorphic to some $\Gamma(G,u)$ are said to be {\it interval effect algebras}. For more information on Abelian partially ordered groups see \cite{Goo}. A sufficient condition to be an effect algebra an interval one is the Riesz Decomposition Property, \cite{Rav}. We say that an effect algebra $E$ satisfies the Riesz Decomposition Property (RDP for short), if $a_1+a_2=b_1+b_2$ implies that there are four elements $c_{11},c_{12},c_{21},c_{22}\in E$ such that $a_1 = c_{11}+c_{12},$ $a_2= c_{21}+c_{22},$ $b_1= c_{11} + c_{21}$ and $b_2= c_{12}+c_{22}$. Equivalently, $E$ has RDP iff $a\le b+c$ implies that there are $b_1,c_1\in E$ such that $b_1\le b$, $c_1\le c$ and $a=b_1+c_1$.
An effect algebra $E$  has the {\it Riesz Interpolation Property} (RIP for short) if, given $x_1,x_2,y_1,y_2$ in $E$ such that  $x_i \le y_j$ for all $i,j=1,2$, there exists an element $z \in E$ such that $x_i \le z\le  y_j$ for all $i,j=1,2$. We note that RDP implies RIP, but the converse is not always true.

An important class of effect algebras consists of clans which are families of functions where all algebraic operations are defined by points. A {\it clan} is a family $\mathcal C$ of fuzzy sets  (= $[0,1]$-valued functions) on a set $\Omega \ne \emptyset$ such that (i) $1\in \mathcal C$, (ii) if $f \in {\mathcal C}$, then $1-f \in {\mathcal C}$, (iii) if $f,g \in {\mathcal C}$, $f \le 1-g$, then $f+g \in {\mathcal C}$.

An effect algebra $E$ is {\it monotone} $\sigma$-{\it complete} if, for any sequence $a_1 \le a_2\le \cdots,$ the element $a = \bigvee_n a_n$  is defined in $E$ (we write $\{a_n\}\nearrow a$). If an effect algebra is a lattice or a $\sigma$-lattice or a complete lattice, we say that $E$ is a {\it lattice effect algebra}, a $\sigma$-{\it lattice effect algebra}, and a {\it complete lattice effect algebra}, respectively.

If $E$ and $F$ are two monotone $\sigma$-complete effect algebras, a homomorphism $h:E \to F$ is said to be a $\sigma$-{\it homomorphism} if $\{a_n\} \nearrow a$ implies $\{h(a_n)\} \nearrow h(a)$ for $a, a_1,\ldots \in E$.

Interesting examples of monotone $\sigma$-complete effect algebras are effect-tribes. We remind that  an {\it effect-tribe}  is any system ${\mathcal T}$ of $[0,1]$-valued functions on
$\Omega\ne \emptyset $ such that (i) $1 \in {\mathcal T}$, (ii) if $f
\in {\mathcal T}$, then $1-f \in {\mathcal T}$, (iii) if $f,g \in {\mathcal T}$,
$f \le 1-g$, then $f+g \in {\mathcal T}$, and (iv) for any sequence $\{f_n\}$ of elements of ${\mathcal T}$ such that $f_n \nearrow f$ (pointwise), then $f \in {\mathcal T}$. It is evident that any effect-tribe is a monotone $\sigma$-complete effect algebra where all algebraic operations are defined by points. We note that $\mathcal E(H)$ is isomorphic to the effect-tribe $\mathcal T(H)=\{(A\phi,\phi): \phi\in H, \|\phi\|=1, A \in \mathcal E(H)\}$ and analogously,  the same is true for $\mathcal P(H)$.

An effect algebra $E$ is an {\it orthoalgebra} if the existence of $a+a$ implies $a=0$. An orthoalgebra $E$ is an {\it orthomodular poset} (OMP for short) if the existence of $a+b$ implies $a\vee b$ exists and in such a case $a+b=a\vee b$, \cite[Prop 1.5.6]{DvPu}. An orthoalgebra $E$ is an orthomodular poset iff the existence of $a+b, a+c, b+c$ implies $a+b+c$ is defined in $E$, \cite[Thm 1.5.5]{DvPu} or \cite[Thm 2.12]{FGR}. If an OMP is also a lattice, we call it an {\it orthomodular lattice} (OML for short).

A more general notion than an effect algebra with RDP is the following notion \cite{Jen}: We say that an effect algebra $E$ is {\it homogeneous} if, whenever $a,b,c \in E$ are such that $a \le b+c$, $a\le (b+c)'$, there are $a_1,a_2 \in E$ such that $a_1\le b,$ $a_2 \le c$ and $a=a_1 + a_2$. We notice that (1) every effect algebra with RDP is homogeneous, (2) every lattice effect algebra is homogeneous, (3) every orthoalgebra is homogenous, (4) every homogeneous effect algebra can be covered by a system of sub-effect algebras $(E_t\colon t \in T)$ of $E$ such that every $E_t$ satisfies RDP, \cite[Thm 3.1, Cor 3.13]{Jen}.

A finite sequence $(a_1,\ldots, a_n)$ of elements of $E$ is {\it summable} if $a:=a_1+\cdots+a_n:=\sum_{n=1}^na_i $ exists, and the element $a$ is said to be the {\it sum} of $(a_1,\ldots,a_n)$. An arbitrary family $(a_t: t \in T)$ of elements of $E$ is said to be {\it summable} if every finite subsystem of $(a_t: t \in T)$ is summable. If, in addition, there exists $a :=\bigvee\{\sum\{a_t: t \in S\}: S$ is a finite subset of $T\}$ in $E$, the element $a$ is said to be the {\it sum} of $(a_t: t \in T)$, and we write $a = \sum_{t \in T}a_t$.

Two elements $a$ and $b$ of an effect algebra $E$ are said to be (i) {\it compatible} and we write $a \leftrightarrow b$ if there
exist three elements $a_1, b_1, c \in E$ such that $a= a_1 + c$, $b= b_1 + c$ and $a_1 + b_1 + c \in E$, (ii) {\it strongly compatible} and we write $a\stackrel{\mbox{\rm c}}{\longleftrightarrow} b$ if there are three elements $a_1, b_1, c \in E$ such that $a = a_1 + c$, $b = b_1 + c$, $a_1 \wedge  b_1 = 0$ and $a_1 +b _1 + c \in E$. We note that $a\stackrel{\mbox{\rm c}}{\longleftrightarrow}b$ implies $a\leftrightarrow b$, but the converse implication does not hold, in general.

We say that an arbitrary subset $M$ of $E$ is (i) {\it compatible} if, for any finite subset $M_F$ of $M,$ there is a summable sequence $(c_1,\ldots,c_k)$  of elements of $E$ such that every element of $M_F$ can be expressed as a sum of some elements from $\{c_1,\ldots,c_k\}$;
(ii) {\it internally compatible} if, for any finite subset $M_F$ of $M,$ there is a summable sequence $(c_1,\ldots,c_k)$  of elements of $M$ such that every element of $M_F$ can be expressed as a sum of some elements from $\{c_1,\ldots,c_k\}$.

A {\it block} of an effect algebra $E$ is a maximal sub-effect algebra, $B$, of $E$ (if it exists in $E$) satisfying RDP. If $E$ is a homogeneous effect algebra, then a subset $B$ of $E$ is a block iff $B$ is a maximal internally compatible set with $1\in B$, \cite[Thm 3.11]{Jen}, and $E$ can be covered by its blocks, \cite[Cor 3.13]{Jen}.

A {\it state} on an effect algebra $E$ is any mapping $s: E \to [0,1]$ such that (i) $s(1)=1$, and (ii) $s(a+b)=s(a)+s(b)$ whenever $a+b$ is defined in $E$. A state $s$ is $\sigma$-{\it additive} if (iii) if $a_n \nearrow a$, then $s(a)=\lim_n s(a_n)$. Equivalently, a state $s$ is $\sigma$-additive if $a= \sum_n a_n$ for a sequence of summable elements $\{a_n\}$, then $s(a)=\sum_n s(a_n)$. For every state $s$ we have (i) $s(a')=1-s(a)$, (ii) $a\le b$ implies $s(a)\le s(b)$.

A non-empty system of states $\mathcal S$ is (i) {\it order-determining} if $s(a)\le s(b)$ for each $s \in \mathcal S$ implies $a\le b$, (ii) {\it full} if, for any $a>0$, there is $s \in \mathcal S$ such that $s(a)=1$.
For example, if $E=\mathcal E(H)$ and $s_\phi(A):=(A\phi,\phi)$, $A \in \mathcal E(H)$, where $\phi$ is any unit vector in $H$, then $\mathcal S_v(H):=\{s_\phi: \phi \in H, \|\phi\|=1\}$ is an order-determining system of $\sigma$-additive states on $\mathcal E(H)$ whilst it is not full; take e.g. $A=1/2I$. A state $s_\phi$ is said to be a {\it vector state} or a {\it pure state}. The restriction of $s_\phi$ onto $\mathcal P(H)$ is also a $\sigma$-additive state and we will denote it also as $s_\phi$.  The system $\mathcal S_v(H)$ is for $\mathcal P(H)$ order-determining as well as full. Similarly, if $\mathcal T$ is an effect-tribe, $s_\omega(f):=f(\omega)$, $f \in \mathcal T$ ($\omega \in \Omega)$, then $\{s_\omega: \omega\in \Omega\}$ is an order-determining system of $\sigma$-additive states on $\mathcal T$ but not necessarily full.

We note that if an effect algebra $E$ has a full system of states, then $x+x \in E$ implies $x=0$. Indeed, $x+x$ means $x\le x'$. If $x>0$, there is a state $s$ such that $s(x)=1$ which entails $s(x)\le s(x')=1$, i.e. $s(x)=0$, a contradiction. In other words, an effect algebra with a full system of states is an orthoalgebra, \cite[Lem 1.5.2]{DvPu}.

An element of an effect algebra $E$ is said to be {\it sharp} if $a \wedge a'$ exists in $E$ and $a\wedge a'=0.$ Let $\Sh(E)$ be the set of sharp elements of $E$. Then (i) $0,1\in \Sh(E),$ (ii) if $a \in \Sh(E),$ then $a'\in \Sh(E).$ If $E$ is a lattice effect algebra, then $\Sh(E)$ is an orthomodular lattice which is a sub-effect algebra and a sublattice of $E$,  \cite{JeRi}. If an effect algebra $E$ satisfies RDP, then by \cite[Thm 3.2]{Dvu2}, $\Sh(E)$ is even a Boolean algebra. If $E$ is a homogenous effect algebra, then $\Sh(E)$ is a sub-effect algebra of $E$, \cite[Cor 4.4]{Jen}, and moreover, $\Sh(E)$ is an orthoalgebra.

An element $a\in E$ is said to be {\it principal} if $x,y \le a$ and $x+y \in E$ imply $x+y \le a$. A principal element $a\in E$ is said to be {\it central}, if for each $b \in E$, there are unique elements $b_1,b_2\in  E$ such that $b_1\le a$ and $b_2 \le a'$ and $b = b_1+b_2$. We denote by $C(E)$ the set of central elements of $E$. We note that an element $a$ of an effect algebra $E$ is central iff there exists an isomorphism
$f_a:\, E \to [0,a] \times [0,a']$ such that $f_a(a) =(a,0)$ and if $f_a(b) = (b_1,b_2)$, then $b=b_1 + b_2$ for any $b \in E$. Then $C(E)$ is a Boolean algebra such that (i) $0,1\in C(E)$, (ii) $a \in C(E)$ implies $a' \in C(E)$, and (iii) for each $a \in C(E)$ and each $x \in E$, $a\wedge x$ and $a\vee x$  exist in $E$, and $(x\vee a)-x=a- (a\wedge x)$,   and (iv) if $E$ is monotone $\sigma$-complete, then $C(E)$ is a Boolean $\sigma$-algebra, \cite[Thm 5.11]{Dvu2}.

We note that $C(\mathcal E(H))=\{O,I\}=C(\mathcal P(H))$ and $\Sh(\mathcal E(H))=\mathcal P(H)$, \cite[Thm 4.4]{104}. In addition, $\mathcal P(H)$ is a sub-effect algebra of $\mathcal E(H)$ and $\mathcal P(H)$ is a complete (orthomodular) lattice such that suprema and infima of projections on $H$ are the same in both effect algebras. We note that $\mathcal P(H)$ is the greatest sub-orthoalgebra (sub-orthomodular poset or sub-orthomodular lattice) of $\mathcal E(H)$ containing $\mathcal P(H)$. Moreover, RDP fails in $\mathcal E(H)$ as well as in $\mathcal P(H)$, and $\mathcal E(H)$ is homogeneous iff $\dim H\le 1$, \cite[Cor 5.5]{Jen}.

\section{Spectral Resolution and Observables} %3

In the section, we define observables, and we point out to their very close connection with spectral resolutions. We show when a spectral resolution determines uniquely an observable. We introduce the Observable Existence Property for effect algebras; it guarantees the existence of an observable corresponding to a given spectral resolution.

Let $E$ be a monotone $\sigma$-complete effect algebra. An {\it observable} on $E$ is any mapping $x:\mathcal B(\mathbb R)\to E$, where $\mathcal B(\mathbb R)$ is the Borel $\sigma$-algebra of the real line $\mathbb R$, such that (i) $x(\mathbb R)=1$, (ii) if $A,B \in \mathcal B(\mathbb R)$, $A \cap B= \emptyset$, then $x(A\cup B)=x(A)+x(B)$, and (iii) if $\{A_i\}$ is a sequence of Borel sets such that $E_i\subseteq E_{i+1}$ for each $i$ and $\bigcup_i A_i=A$, then $x(A) = \bigvee_i x(A_i)$. The set $\mathcal R(x)=\{x(A): A \in \mathcal B(\mathbb R)\}$ is said to be the  {\it range} of an observable $x$.

Observe that any observable is a $\sigma$-homomorphism of monotone $\sigma$-complete effect algebras. The basic properties of observables are
(i) $x(\mathbb R \setminus A)=x(A)'$, (ii) $x(\emptyset)=0$, (iii) $x(A)\le x(B)$ whenever $A \subseteq B$, and $x(B\setminus A)= x(B)-x(A)$, (iv) if $\{A_i\} \searrow A$, then $x(A) = \bigwedge_i x(A_i)$, (v) if $x(A)=x(B)=0$, then $x(A\cup B)=x(A\setminus B)+x(B\setminus A)+x(A\cap B)=0$,  (vi) if $x(A_n)=0$ for each $n\ge 1$, then $x(\bigcup_n A_n)=0$, (vii) $x(A)+x(B)$ exists iff does $x(A\cup B)+x(A\cap B)$, and in such a case,  $x(A)+x(B)=x(A\cup B)+x(A\cap B)$, (viii) $x(A\cup B)-x(A)=x(B)-x(A\cap B)$, (ix) if $x(B)=0$, then $x(A\cup B)=x(A)$, (x) if $x(B)=1$, then $x(A)=x(A\cap B)$, and (xi) $x(A) \leftrightarrow x(B)$. (xi) $\mathcal R(x)$ is an internally compatible subset of $E$. Indeed, given a Borel set $A$, we define $A^1:=A$ and $A^0:=\mathbb R \setminus A$. If $A_1,\ldots,A_n$ are Borel sets, for any sequence $(i_1,\ldots,i_n)$ of $0$'s and $1$'s, let $A(i_1,\ldots,i_n):= A_1^{i_1}\cap\cdots \cap A_n^{i_n}$. Then $\{x(A(i_1,\ldots,i_n)): (i_1,\ldots,i_n)\in \{0,1\}^n\}$ is a summable system of elements from $\mathcal R(x)$ and each element $x(A_i)$ is a sum of some finitely many elements of the summable system.

The least closed subset $C$ of $\mathcal B(\mathbb R)$ is said to be the {\it spectrum} of $x$, and we denote it by $\sigma(x)$; since the natural topology of the real line satisfies the second countability axiom, $\sigma(x)$ exists, and $x(\sigma(x))=1$. In addition, a point $\lambda \in \mathbb R$ belongs to $\sigma(x)$ iff for any open set $U$ containing $\lambda$, $x(U)\ne 0$. An observable $x$ is: (i) {\it bounded}, if $\sigma(x)$ is a compact set, (ii) {\it bounded from above} ({\it from below}), if there is $k\in \mathbb R$ such that $\sigma(x) \subseteq (-\infty,k]$ ($\sigma(x) \subseteq [k,\infty)$), (iii) {\it positive} ({\it negative}) if $\sigma(x) \subseteq [0,\infty)$ ($\sigma(x) \subseteq (-\infty,0])$, (iv) a {\it question} if $\sigma(x)\subseteq \{0,1\}$, (v) {\it simple} if $\sigma(x)$ is a finite non-empty subset of $\mathbb R$, and (vi) an {\it effect observable} if $\sigma(x) \subseteq [0,1]$.

Given a bounded observable $x$, we define the {\it norm} of $x$ as $\|x\| := \sup\{|\lambda|: \lambda \in \sigma(x)\}$. Then $x=q_0$ iff $\|x\|=0$.

If $x$ is an observable and $f: \mathbb R\to \mathbb R$ is a Borel measurable function, then the mapping $f\circ x: A \mapsto x(f^{-1}(A))$, $A \in \mathcal B(\mathbb R)$, is also an observable. If, e.g. $f(t)=t^n$, $t \in \mathbb R$, then we write $f\circ x = x^n$. It is possible to show that if $x$ is a bounded observable and $f$ is continuous on $\sigma(x)$, then $\sigma(f\circ x)=f(\sigma(x))$.

We denote by $\mathcal O(E)$, $\mathcal {BO}(E)$ and $\mathcal {EO}(E)$ the set of all observables, bounded observables and the set of effect observables, respectively, on $E$.

For example, let $\{a_n\}$ be a finite or infinite sequence of summable elements, $\sum_n a_n = 1$, and let $\{t_n\}$ be a sequence of mutually different real numbers. Then the mapping $x: \mathcal B(\mathbb R) \to E$ defined by

$$
x(A):= \sum\{a_n: t_n \in A\}, \ A \in \mathcal B(\mathbb R), \eqno(3.0)
$$
is an observable on $E$. In particular, if $t_0=0$, $t_1=1$ and $a_0=a'$ $a_1=a$ for some fixed element $a \in E,$  $x$ defined by (3.0) is an observable, called a {\it question corresponding to the element} $a$, and we write $x=q_a$.

The following characterization of  question observables was known for observables on $\sigma$-orthomodular lattices, see e.g. \cite[Lem 3.5(iii)]{Gud}. Here we generalize it as follows:

\begin{lemma}\label{le:3.1}
A bounded observable $x$ on a monotone $\sigma$-complete effect algebra $E$ is a question observable if and only if $x=x^2$.
\end{lemma}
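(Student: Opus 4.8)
The plan is to prove both implications directly from the definition of $f\circ x$ and the characterization of the spectrum. Recall that $x^2 = f\circ x$ for $f(t)=t^2$, so $x^2(A) = x(f^{-1}(A)) = x(\{t\in\mathbb R: t^2\in A\})$ for every Borel set $A$.

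For the ``only if'' direction, suppose $x$ is a question observable, so $\sigma(x)\subseteq\{0,1\}$. Since $x(\sigma(x))=1$ and (by property (ix) applied to sets of $x$-measure zero, together with additivity) $x$ is carried by its spectrum, we have $x = q_a$ where $a:=x(\{1\})$ and $a'=x(\{0\})$; more precisely, for any Borel $A$ one has $x(A)=x(A\cap\{0,1\})$ by property (x) with $B=\sigma(x)$, and then $x(A)$ equals $0$, $a$, $a'$, or $1$ according to which of $0,1$ lie in $A$. Since the map $t\mapsto t^2$ fixes both $0$ and $1$, we get $f^{-1}(A)\cap\{0,1\} = A\cap\{0,1\}$ for every $A$, hence $x^2(A)=x(f^{-1}(A))=x(f^{-1}(A)\cap\{0,1\})=x(A\cap\{0,1\})=x(A)$. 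Thus $x^2=x$.

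For the ``if'' direction, assume $x=x^2$ and let $\lambda\in\sigma(x)$; I must show $\lambda\in\{0,1\}$. Since $x$ is bounded, $\sigma(x)$ is a compact subset of $\mathbb R$. The key is that $x=x^2$ forces the spectrum to be a fixed-point set of $t\mapsto t^2$: indeed, using the earlier-quoted fact that $\sigma(f\circ x)=f(\sigma(x))$ when $f$ is continuous on $\sigma(x)$, we obtain $\sigma(x)=\sigma(x^2)=\{\lambda^2:\lambda\in\sigma(x)\}$. So $\sigma(x)$ is a compact set invariant under squaring. If some $\lambda\in\sigma(x)$ had $|\lambda|>1$, then the orbit $\lambda,\lambda^2,\lambda^4,\dots$ would be an unbounded subset of $\sigma(x)$, contradicting compactness; if $0<|\lambda|<1$, then... this direction of the orbit stays bounded, so I instead argue as follows: since squaring maps $\sigma(x)$ onto itself and $\sigma(x)\subseteq[-M,M]$ with $M=\|x\|$, actually $\sigma(x)\subseteq[0,M]$ because the image lies in $[0,\infty)$, and then $M=\sup\sigma(x)$ satisfies $M^2\in\sigma(x)$, so $M^2\le M$, giving $M\le 1$; similarly, if $m$ is the smallest strictly positive element of $\sigma(x)$ (it exists if $\sigma(x)\ne\{0\}$, as $\sigma(x)$ is compact) then $m^2\ge m$ forces... wait, $m^2\le m$ when $m\le 1$, so this needs the clustering point argument: the decreasing orbit $\lambda>\lambda^2>\lambda^4>\cdots$ converges to a point of $\sigma(x)$ (closedness) which must be a fixed point of squaring, i.e.\ $0$; that is consistent, so to exclude $0<\lambda<1$ I use that $\sqrt{\lambda}$ must also lie in $\sigma(x)$ (since squaring is onto), and iterating, $\lambda^{1/2^n}\to 1\in\sigma(x)$, which is fine, but then taking the supremum argument above we already know $\sup\sigma(x)\le 1$, and symmetrically if $-1\le\lambda<0$ then $\lambda^2\in(0,1]\subseteq\sigma(x)$; combining, every point of $\sigma(x)$ lies in $[0,1]$ and its positive square root also lies in $\sigma(x)\subseteq[0,1]$. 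Finally, for $\lambda\in(0,1)$, the sequence $\lambda^{1/2^n}$ is an infinite set of distinct points of $\sigma(x)$ accumulating at $1$; this alone is allowed, so the clean finish is: $x=x^2$ implies $x=x^{2^n}$ for all $n$, hence $f_n\circ x = x$ where $f_n(t)=t^{2^n}$, and $\sigma(x)=f_n(\sigma(x))$; for any $\lambda\in\sigma(x)\cap(0,1)$ we would need $\lambda=\mu^{2^n}$ for some $\mu\in\sigma(x)$, forcing $\mu=\lambda^{1/2^n}\to 1$, and since $\sigma(x)$ is closed this only says $1\in\sigma(x)$. I expect the genuinely clean route is purely algebraic: translate $x=x^2$ into the statement $x(A)=x(\{t:t^2\in A\})$, apply it to $A=(1/2-\varepsilon,1/2+\varepsilon)$ type sets or directly to $A=\mathbb R\setminus\{0,1\}$, noting $f^{-1}(\mathbb R\setminus\{0,1\})=\mathbb R\setminus\{0,1,-1\}$; iterate once more to handle $-1$; conclude $x(\mathbb R\setminus\{0,1\})=x(\mathbb R\setminus\{0,1,-1\})=\cdots$, and a short measure-theoretic argument on the set where $t\mapsto t^2$ has no periodic points shows this common value must be $0$, whence $\sigma(x)\subseteq\{0,1\}$.

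The main obstacle is precisely this last step: ruling out spectral mass on $(0,1)$ and on $[-1,0)$. The slick way to handle it is to avoid orbits entirely and argue on the effect-algebra side: from $x=x^2$ one gets, for the Borel set $D=\mathbb R\setminus\{-1,0,1\}$, that $x(D)=x(g^{-1}(D))$ where $g(t)=t^2$, and since $g^{-1}(D)\subseteq D$ one can iterate $x(D)=x(g^{-n}(D))$; because $\bigcap_n g^{-n}(D)=\emptyset$ (every real number eventually leaves $D$ under repeated squaring unless it is $0$ or $\pm 1$), property (iv) (continuity from above) gives $x(D)=\bigwedge_n x(g^{-n}(D))=x(\emptyset)=0$, and then $x(\{-1\})=0$ as well since $g(-1)=1$ and a one-step comparison with $x(\{1\})$ via $x=x^2$ shows $x(\{-1,1\})=x(g^{-1}(\{1\}))$... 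I will arrange this final bookkeeping so that only $\{0,1\}$ survives, proving $\sigma(x)\subseteq\{0,1\}$, i.e.\ $x$ is a question observable.
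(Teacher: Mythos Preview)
Your ``only if'' direction is fine and more explicit than the paper's one-line remark.

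The ``if'' direction, however, does not close. Your spectral-invariance attempts (using $\sigma(x)=f(\sigma(x))$) correctly yield $\sigma(x)\subseteq[0,1]$ but, as you yourself observe, cannot exclude points of $(0,1)$: a compact subset of $[0,1]$ containing $0$ and $1$ and stable under $t\mapsto t^2$ and $t\mapsto\sqrt{t}$ can perfectly well be all of $[0,1]$. Your final ``slick'' route then contains a genuine error. With $g(t)=t^2$ and $D=\mathbb R\setminus\{-1,0,1\}$ you compute $g^{-1}(D)\subseteq D$ and iterate to $x(D)=x(g^{-n}(D))$, claiming $\bigcap_n g^{-n}(D)=\emptyset$. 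But $g^{-1}(D)=\{s:s^2\notin\{-1,0,1\}\}=\{s:s\ne 0,\pm 1\}=D$, so $g^{-n}(D)=D$ for every $n$ and the intersection is $D$, not $\emptyset$. The parenthetical justification (``every real number eventually leaves $D$ under repeated squaring'') refers to \emph{forward} orbits $g^n$, whereas the identity $x(A)=x(g^{-1}(A))$ coming from $x=x^2$ pulls sets \emph{back}. These go in opposite directions, and no continuity-from-above argument is available here.

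The paper avoids the spectrum entirely and works with concrete intervals. From $x=x^2$ one first gets positivity, so $x((-\infty,0))=0$. Then for $t>1$ (using positivity to discard $(-\infty,-t]$) one obtains $x([t,\infty))=x^2([t^2,\infty))=x([t^2,\infty))$, hence $x([t,t^2))=0$; iterating gives $x([t^{2^n},t^{2^{n+1}}))=0$ for all $n$, and summing these disjoint pieces yields $x([t,\infty))=0$, whence $x((1,\infty))=0$. Symmetrically, for $0<t<1$ one has $x([t^2,t))=0$, and the telescoping union gives $x((0,t))=0$, so $x((0,1))=0$. This forces $\sigma(x)\subseteq\{0,1\}$. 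The missing idea in your attempt is precisely this interval-splitting: rather than trying to push a single set through $g^{-1}$, one uses the equality $x([t,\infty))=x([t^2,\infty))$ to peel off the ``gap'' $[t,t^2)$ (or $[t^2,t)$) on which $x$ must vanish.
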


\begin{proof}
Clearly, if $\sigma(x)\subseteq \{0,1\}$, then $x=x^2$.  Conversely, assume that $x=x^2$. Then $x$ is positive, so that $x((-\infty,0))=0$. Choose a rational $t>1$. Then $t< t^2$ so that $x([t^2,\infty))= x^2([t^2,\infty))= x([t,\infty))$ which yields $x([t,t^2))= x([t^2,\infty)) - x([t,\infty))=0$. In a similar way, we have $x([t^{2^n},t^{2^{n+1}}))=0$ for each $n\ge 1$. Hence, $x([t,\infty))=\sum_{n=0}^\infty x([t^{2^n},t^{2^{n+1}})=0$, and since $t>1$ was arbitrary, we have $x((0,\infty))=\bigvee_{t\searrow 0}x((t,\infty))=0$.

Now let a rational number $t$ be such that $0<t<1$. Then $t^2<t$, so that $x([t^2,t))=0$, and in a similar way, we have $x([t^{2^{n+1}},t^{2^n}))=0$ for each $n\ge 1$. Therefore, $x((0,t))=0$. Since $t<1$ was arbitrary, we have $x((0,1))=0$.

Finally, we have $\sigma(x)\subseteq \{0,1\}$.
\end{proof}

We present a simple but useful proposition, see also \cite[Rem 3.3]{DvKu}:

\begin{proposition}\label{pr:3.2}
Let $x$ be an observable on a monotone $\sigma$-complete effect algebra $E$. Given a real number $t \in \mathbb R,$ we put

$$ x_t := x((-\infty, t)). \eqno(3.1)
$$
Then

$$ x_t \le x_s \quad {\rm if} \ t < s, \eqno (3.2)$$

$$\bigwedge_t x_t = 0,\quad \bigvee_t x_t =1, \eqno(3.3)
$$
and
$$ \bigvee_{t<s}x_t = x_s, \ s \in \mathbb R. \eqno(3.4)
$$

Conversely, if there is a system $\{x_t: t \in \mathbb R\}$ of elements of $E$ satisfying {\rm
(3.2)--(3.4)} and if there is an observable $x$ on $E$ for which $(3.1)$ holds for any $t \in \mathbb R$, then $x$ is unique.
\end{proposition}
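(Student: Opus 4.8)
The plan is to handle the two directions separately. The forward direction is a direct consequence of the monotonicity and continuity properties of observables, while the converse is a monotone-class argument whose only real content is identifying the generating algebra on which the two observables are forced to agree.

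\emph{Forward direction.} Property (3.2) is immediate: if $t<s$ then $(-\infty,t)\subseteq(-\infty,s)$, so $x_t=x((-\infty,t))\le x((-\infty,s))=x_s$ by the basic property (iii) of observables. For (3.3), choose a sequence $t_n\searrow-\infty$; then $(-\infty,t_n)\searrow\emptyset$, so by basic property (iv) we get $\bigwedge_n x_{t_n}=x(\emptyset)=0$. Since $\{x_t\}$ is a chain by (3.2), this infimum over the coinitial sequence is a lower bound for every $x_t$ and is clearly the greatest one, so $\bigwedge_t x_t=0$; dually, taking $t_n\nearrow\infty$ gives $(-\infty,t_n)\nearrow\mathbb R$ and $\bigvee_t x_t=\bigvee_n x_{t_n}=x(\mathbb R)=1$ by axiom (iii) of an observable. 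For (3.4), fix $s$, pick $t_n\nearrow s$ with $t_n<s$; then $\bigcup_n(-\infty,t_n)=(-\infty,s)$, so axiom (iii) yields $\bigvee_n x_{t_n}=x((-\infty,s))=x_s$, and monotonicity upgrades this to $\bigvee_{t<s}x_t=x_s$. Note that second countability of $\mathbb R$ is exactly what lets us reduce each of these suprema and infima to monotone sequences, so that monotone $\sigma$-completeness of $E$ suffices for their existence.

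\emph{Converse (uniqueness).} Suppose $x$ and $y$ are observables on $E$ with $x((-\infty,t))=x_t=y((-\infty,t))$ for every $t\in\mathbb R$; I want to show $x(A)=y(A)$ for every $A\in\mathcal B(\mathbb R)$. First I show that $x$ and $y$ agree on the algebra $\mathcal A$ of subsets of $\mathbb R$ generated by $\{(-\infty,t):t\in\mathbb R\}$, which consists of all finite disjoint unions of intervals of the forms $(-\infty,a)$, $[a,b)$ and $[b,\infty)$ (plus $\emptyset$ and $\mathbb R$) and which generates $\mathcal B(\mathbb R)$. Using the difference part of basic property (iii), $x([a,b))=x((-\infty,b))-x((-\infty,a))=x_b-x_a=y([a,b))$ for $a\le b$, and by basic property (i), $x([b,\infty))=x((-\infty,b))'=x_b'=y([b,\infty))$; then finite additivity (axiom (ii) of the definition) shows that $x$ and $y$ coincide on every member of $\mathcal A$.

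Finally, set $\mathcal M:=\{A\in\mathcal B(\mathbb R):x(A)=y(A)\}$. If $A_i\nearrow A$ with each $A_i\in\mathcal M$, then by axiom (iii) of an observable $x(A)=\bigvee_i x(A_i)=\bigvee_i y(A_i)=y(A)$, so $A\in\mathcal M$; dually, by basic property (iv), $\mathcal M$ is closed under decreasing intersections. Hence $\mathcal M$ is a monotone class containing the algebra $\mathcal A$, so the monotone class theorem gives $\mathcal M\supseteq\sigma(\mathcal A)=\mathcal B(\mathbb R)$, i.e.\ $x=y$. The main (mild) obstacle is the bookkeeping that the right-half-open intervals together with the two kinds of rays really do form a generating algebra and that both observables are pinned down on all of it by the values $x_t$; once that is in place, the monotone class theorem finishes the argument, and this is precisely the step where monotone $\sigma$-completeness of $E$ is used, since we only ever need suprema and infima of monotone sequences.
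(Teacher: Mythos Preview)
Your proof is correct. The forward direction matches the paper's (which is very terse there), and your treatment of the suprema/infima via cofinal and coinitial sequences is exactly what is needed.

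For the uniqueness part you take a slightly different, though closely related, route. The paper observes directly that $\mathcal K=\{A\in\mathcal B(\mathbb R):x(A)=y(A)\}$ is a Dynkin system containing the $\pi$-system $\{(-\infty,t):t\in\mathbb R\}$ and invokes the Sierpi\'nski ($\pi$--$\lambda$) theorem to conclude $\mathcal K=\mathcal B(\mathbb R)$. You instead first extend agreement to the whole generating algebra $\mathcal A$ of half-open intervals and rays, and then apply the monotone class theorem. Both arguments are standard and essentially equivalent; the paper's version is a little shorter because it avoids the bookkeeping of describing $\mathcal A$ and checking agreement on each type of interval, while your version has the mild advantage of making the monotone-limit step (closure under increasing unions and decreasing intersections) completely explicit in terms of the observable axioms.
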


\begin{proof}
Since $E$ is a monotone $\sigma$-complete effect algebra, from (3.2) we conclude that (3.4) holds. Indeed, if $\{t_n\}\nearrow s$ and $\{t'_n\}\nearrow s$ are two sequences, then $\bigvee_n x_{t_n}=x_s=\bigvee_n x_{t'_n} = \bigvee_{t<s} x_t$. Similarly for the other properties.

Now assume that $x$ is an observable such that $x((-\infty,t))=x_t$ for each $t \in \mathbb R$. Let $y$ be another observable of $E$ such that $y((-\infty,t))=x_t$ for each $t \in \mathbb R$. We denote by $\mathcal K=\{A \in \mathcal B(\mathbb R): x(A)=y(A)\}$. Then $\mathcal K$ is a  Dynkin system, i.e. a system of subsets containing its universe which is closed under the set theoretical complements and countable unions of  disjoint subsets, \cite{Bau}. It contains all intervals of the form $(-\infty,t)$ for each $t \in \mathbb R$; these intervals form a $\pi$-system, i.e. intersection of any two sets from the $\pi$-system is from the $\pi$-system.  Hence by the Sierpi\'nski Theorem, \cite[Thm 1.1]{Kal}, $\mathcal K$ is also a $\sigma$-algebra, and consequently, we have $\mathcal K= \mathcal B(\mathbb R)$.
\end{proof}

It is worthy to recall the following note. Let $\{x_t: t \in \mathbb R\}$ satisfy (3.2)--(3.4), $\mathbb R^*=\mathbb R \cup \{-\infty\}\cup \{\infty\}$, and $x_{-\infty}=0$ and $x_\infty =1$. Define $M=\{x_t-x_s: s,t \in \mathbb R^*, s\le t\}$. Then $M$ is internally compatible. (The same is true if we change $\mathbb R$ to $\mathbb Q$.) Indeed, let $F=\{x_{t_1}-x_{s_i},\ldots, x_{t_n}-x_{s_n}\}$, where $s_i\le t_i$ and $t_i,s_i\in \mathbb R^*$ for $i=1,\ldots,n$. Then $\{t_1,s_2,\ldots, t_n,s_n\}=\{u_1,\ldots,u_m\}$, where $u_1<\cdots <u_m$. Then $S=\{x_{u_1},x_{u_2}-x_{u_1},\ldots, x_{u_m}-x_{u_{m-1}}\}$ is a summable system of $E$ and every element $x_{t_i}-x_{s_i}$ for $i=1,\ldots,n$, is a sum of some finitely many elements of $S$.

The system $\{x_t: t \in \mathbb R\}$ from Proposition \ref{pr:3.2} satisfying (3.2)--(3.4) is said to be the {\it spectral resolution} of an observable $x$. To represent observables by their spectral resolutions and vice-versa, we have to establish conditions when a given system $\{x_t: t \in \mathbb R\}$ of elements of $E$ satisfying (3.2)--(3.4) determines an observable $x$ such that $x((-\infty,t))=x_t$, $t \in \mathbb R$. By Proposition \ref{pr:3.2}, it is enough to find at least one observable $x$, then it is unique. Some such conditions were established in \cite{DvKu, 270}.

Motivated by this problem, we say that a monotone $\sigma$-complete effect algebra $E$ has the {\it Observable Existence Property} (OEP, for short) if given a system of elements $\{x_t: t \in \mathbb R\}$  of $E$ satisfying (3.2)--(3.4), there is an observable $x$ such that $x((-\infty,t))=x_t$, $t \in \mathbb R$. We denote by $\mathcal{OEP}(EA)$ the class of effect algebras with OEP. The class $\mathcal{OEP}(EA)$ contains these effect algebras:

\begin{enumerate}
\item[(i)] $\sigma$-MV-algebras, \cite[Thm 3.2]{DvKu}.
\item[(ii)] $\sigma$-lattice effect algebras, \cite[Thm 3.5]{DvKu}.
\item[(iii)] Boolean $\sigma$-algebras, \cite[Thm 3.6]{DvKu}.
\item[(iv)] $\sigma$-orthocomplete orthomodular posets, \cite[Thm 3.8]{DvKu}.
\item[(v)] Monotone $\sigma$-complete effect algebras with RDP, \cite[Thm 3.9]{DvKu}.
\item[(vi)] $\mathcal E(H)$, \cite[Thm 3.10]{DvKu}.
\item[(vii)] Effect-tribes, \cite[Thm 3.11]{DvKu}.
\item[(viii)] Monotone $\sigma$-complete effect algebras with RIP and DMP, \cite[Thm 4.3]{270}.
\end{enumerate}

We do not know whether every monotone $\sigma$-complete effect algebra belongs to the class $\mathcal{OEP}(EA)$.

A partial answer to this problem is the following result. First we define a kind of representable monotone $\sigma$-complete effect algebras. Another type of representable effect algebras was defined in \cite{270}.

We say that a monotone $\sigma$-complete effect algebra $E$ is {\it representable} if there is an effect-tribe of fuzzy sets $\mathcal T$ and a $\sigma$-homomorphism $h$ of monotone $\sigma$-complete effect algebras from $\mathcal T$ onto $E$ such that if $f,g \in \mathcal T$ with $f \le g$ and if there is $c_1 \in E$ such that $h(f)\le c_1\le h(g)$, then there is $c\in \mathcal T$ such that $h(c)=c_1$ and $f\le c\le g$. For example, every (1) monotone $\sigma$-complete effect algebra with RDP (\cite{BCD}) and \cite[Thm 3.9, Claim 2]{DvKu}, (2) $E$ which is a $\sigma$-MV-algebra \cite{Dvu1}, (3) $\mathcal E(H)$, (4) effect-tribe is representable. Representable monotone $\sigma$-complete effect algebras are connected with the Loomis--Sikorski representation of monotone $\sigma$-complete effect algebras by effect-tribes, see e.g. \cite{BCD, Dvu1, Mun}.

We note that every monotone $\sigma$-complete effect algebra $E$ with an order-deter\-min\-ing system $\mathcal S$ of $\sigma$-additive states is representable. Indeed, for any $a\in E$, we define a function $\bar a:\mathcal S \to [0,1]$ such that $\bar a(s):=s(a)$, $s \in \mathcal S$. Then $\overline{E} :=\{\bar a: a \in E\}$ is an effect-tribe that is $\sigma$-isomorphic to $E$.

In addition, every monotone $\sigma$-complete effect algebra $E$ is representable if $E$ possesses an order-determining system of states: The family $\overline{E} :=\{\bar a: a \in E\}$ is a clan isomorphic to $E$ that is even an effect-tribe. Indeed, let $\{\bar a_n\}$ be a non-decreasing sequence of elements of $\overline{E}$. Then $\{a_n\}$ is a non-decreasing sequence of elements of $E$, so that there is $a\in E$ such that $\{a_n\}\nearrow a$. Then $\bar a_n\le \bar a$. If there is $b\in E$ such that $\bar a_n \le \bar b$ for each $n\ge 1$, then $a_n\le b$ and $a\le b$ which yields $\bar a\le \bar b$. Consequently, $E$ is $\sigma$-isomorphic to the effect-tribe $\overline{E}$.

In the next result we show that every representable monotone $\sigma$-complete effect algebra belongs to the class $\mathcal{OEP}(EA)$.

\begin{theorem}\label{th:3.3}
Every representable monotone $\sigma$-complete effect algebra satisfies {\rm OEP}.
\end{theorem}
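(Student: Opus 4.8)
The plan is to lift the given system $\{x_t:t\in\mathbb{R}\}$ to the representing effect-tribe $\mathcal{T}$, where all operations are computed pointwise and where \textrm{OEP} is already known to hold, and then to push the resulting observable back down along $h$. The point is that the composition $h\circ y\colon\mathcal{B}(\mathbb{R})\to E$ of the $\sigma$-homomorphism $h\colon\mathcal{T}\to E$ with an observable $y$ on $\mathcal{T}$ is again an observable on $E$ (observables are $\sigma$-homomorphisms of monotone $\sigma$-complete effect algebras, and $h(1)=1$, $h$ preserves sums on disjoint elements and suprema of increasing sequences). So it is enough to produce a family $\{g_s:s\in\mathbb{R}\}$ in $\mathcal{T}$ satisfying (3.2)--(3.4) in $\mathcal{T}$ with $h(g_s)=x_s$ for all $s$; then \textrm{OEP} for effect-tribes gives an observable $y$ on $\mathcal{T}$ with $y((-\infty,s))=g_s$, and $x:=h\circ y$ does the job, since $x((-\infty,s))=h(g_s)=x_s$.

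First I would build a nondecreasing family $\{f_t:t\in\mathbb{Q}\}$ in $\mathcal{T}$ with $h(f_t)=x_t$. Enumerating $\mathbb{Q}=\{q_1,q_2,\dots\}$ and proceeding inductively, suppose $f_{q_1},\dots,f_{q_{n-1}}$ are already chosen so that $q_i<q_j$ implies $f_{q_i}\le f_{q_j}$; then these are totally ordered, so $a:=\max\{f_{q_i}:q_i<q_n,\ i\le n-1\}$ (or $a:=0$ if that set is empty) and $b:=\min\{f_{q_i}:q_i>q_n,\ i\le n-1\}$ (or $b:=1$) are well defined elements of $\mathcal{T}$, $a\le b$, and $h(a)\le x_{q_n}\le h(b)$ by (3.2). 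The interpolation clause in the definition of \emph{representable}, applied to $a\le b$ and $h(a)\le x_{q_n}\le h(b)$, yields $f_{q_n}\in\mathcal{T}$ with $h(f_{q_n})=x_{q_n}$ and $a\le f_{q_n}\le b$, which keeps the family nondecreasing. Next extend to the reals by $f_s:=\bigvee_{t\in\mathbb{Q},\,t<s}f_t$, a supremum of an increasing sequence, hence existing in the monotone $\sigma$-complete $\mathcal{T}$ and computed pointwise. Using (3.4) and the fact that a $\sigma$-homomorphism preserves countable suprema of increasing chains (and dually infima of decreasing ones, as it preserves complements), one checks that $\{f_s:s\in\mathbb{R}\}$ is nondecreasing, satisfies $\bigvee_{s'<s}f_{s'}=f_s$, and has $h(f_s)=x_s$ for every $s$.

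The one genuine difficulty is that this lift need not satisfy the boundary conditions in (3.3): writing $m:=\bigwedge_s f_s$ and $p:=\bigvee_s f_s$ (both existing and computed pointwise in $\mathcal{T}$), one only gets $h(m)=0$ and $h(p)=1$, so $h(p')=0$, while $m$ and $p'$ may well be nonzero in $\mathcal{T}$. To reabsorb these ``defects'' I use the pointwise structure of $\mathcal{T}$: put $d:=m+p'$, defined since $m\le p=1-p'$, with $h(d)=0$, and $\widehat f_s:=f_s-m$, defined since $m\le f_s$, so that $\{\widehat f_s\}$ is still nondecreasing and left-continuous in the above sense, $h(\widehat f_s)=x_s$, $\bigwedge_s\widehat f_s=0$, and $\bigvee_s\widehat f_s=p-m=d'$. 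Now let $d_s:=0$ for $s\le 0$ and $d_s:=d$ for $s>0$, and set $g_s:=\widehat f_s+d_s$; this sum lies in $\mathcal{T}$ because, pointwise, $\widehat f_s\le d'=1-d\le 1-d_s$. A pointwise computation — using that for nondecreasing (resp. nonincreasing) real sequences the supremum (resp. infimum) of a sum is the sum of the suprema (resp. infima) — shows that $\{g_s:s\in\mathbb{R}\}$ satisfies (3.2)--(3.4) in $\mathcal{T}$, while $h(g_s)=x_s+h(d_s)=x_s$ for all $s$ (as $h(d_s)=0$).

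Then the tribe does the rest: since $\mathcal{T}$ is an effect-tribe it satisfies \textrm{OEP} \cite[Thm 3.11]{DvKu}, so there is an observable $y$ on $\mathcal{T}$ with $y((-\infty,s))=g_s$, and $x:=h\circ y\colon\mathcal{B}(\mathbb{R})\to E$ is an observable on $E$ with $x((-\infty,s))=h(g_s)=x_s$ for all $s\in\mathbb{R}$; hence $E\in\mathcal{OEP}(EA)$. I expect the main obstacle to be precisely the repair of the third paragraph: the naive monotone lift of a spectral resolution lands in $\mathcal{T}$ with the ``defects'' $m$ and $p'$ in the kernel of $h$, and one has to reabsorb them without destroying monotonicity, left-continuity, or the identity $h(g_s)=x_s$ — which is exactly what subtracting $m$ throughout and feeding $d$ back in through the step function $d_s$ accomplishes. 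The monotone lifting itself is a routine back-and-forth argument over $\mathbb{Q}$ powered by the interpolation clause of representability, and the final descent along $h$ is formal.
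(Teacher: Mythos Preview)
Your argument is correct and follows the same overall strategy as the paper: lift the spectral resolution to the representing effect-tribe $\mathcal T$ via the interpolation clause (by induction over an enumeration of $\mathbb Q$), construct an observable there, and push it down along $h$. The paper builds the observable on $\mathcal T$ by hand, reading each $\omega\mapsto b_t(\omega)$ as a distribution function and invoking the Halmos correspondence with probability measures, while you simply cite the already established OEP for effect-tribes; this is a cosmetic difference. The substantive difference is your third paragraph: you explicitly repair the boundary conditions (3.3) for the lifted family by subtracting $m=\bigwedge_s f_s$ and reinjecting $d=m+p'$ through the step $d_s$, whereas the paper just asserts that $\lim_{t\to-\infty}F_\omega(t)=0$ and $\lim_{t\to\infty}F_\omega(t)=1$ without argument. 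Since the interpolation lift only guarantees $h(\bigwedge_t b_t)=0$ and $h(\bigvee_t b_t)=1$, not that these equal $0$ and $1$ in $\mathcal T$, your treatment actually closes a gap the paper glosses over.
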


\begin{proof}
Let  $\{x_t: t \in \mathbb R\}$ be  a system of elements of $E$ satisfying {\rm(3.2)--(3.4)}. Let $r_1,r_2,\ldots$ be any enumeration of the set of rational numbers. Due to the assumptions, there are a tribe $\mathcal T$ of fuzzy functions from $[0,1]^\Omega$ for some non-void set $\Omega$ and a $\sigma$-homomorphism $h$ from $\mathcal T$ onto $E$ such that if $f,g\in \mathcal T$ with $f\le g$ are given and if there is $c_1\in E$ with $h(f)\le c_1\le h(g)$, there is $c\in \mathcal T$ such that $f\le c\le g$ and $h(c)=c_1$. Given $r_n,$ let $a_n$ be a function from the tribe $\mathcal T$ such that $h(a_n)=x_{r_n}$ for any $n\ge 1$. We are claiming that it is possible to find such a sequence of functions $\{b_n\}$ from $\mathcal T$ such that $
h(b_n)=x_{r_n}$ for any $n\ge 1$ and $b_n \le b_m$ whenever $r_n < r_m.$ Indeed, if $n=1$, we set $b_1 = a_1.$ By mathematical induction suppose that we have find $b_1,\ldots,b_n$ such that $h(b_i)=x_{r_i}$ and $b_i \le b_j$ whenever $r_i < r_j$ for $i,j=1,\ldots, n$. Let $j_1,\ldots,j_n$ be a permutation of $1,\ldots,n$ such that $r_{j_1}<\cdots<r_{j_n}$.  For $r_{n+1}$ we have three possibilities (i) $r_{n+1}< r_{j_1}$, (ii) there exists $k =1,\ldots,n-1$ such that $r_{j_k} < r_{n+1} < r_{j_{k+1}},$ or (iii) $r_{j_n} < r_{n+1}$.  Applying the assumption, we can find $b_{n+1}\in \mathcal T,$ $h(b_{n+1})=r_{n+1},$ such that for all $i,j =1,\ldots, n+1$, $b_i \le b_j$ whenever $r_i < r_j$.

Thus, we can assume that the sequence of functions $\{b_{r_n}\}$, where $b_{r_n}:= b_n$ for $n \ge 1$,  is linearly ordered. Using the density of rational numbers in $\mathbb R$, for any real number $t\in \mathbb R$, we can find a function $b_t \in \mathcal T$ such that $h(b_t)=x_t$.  Indeed, if $\{p_n\} \nearrow t$ and $\{q_n\}\nearrow t$ for two sequences of rational numbers, $\{p_n\}$ and $\{q_n\}$, we can show that $\bigvee_n b_{p_n}=\bigvee_n b_{q_n}$.  Hence, $b_t := \bigvee_n b_{r_n}$ is a well-defined element of $\mathcal T$ satisfying $h(b_t)=r_t$.  In addition, the system of functions $\{b_t: t \in \mathbb R\}$ is also linearly ordered, and $b_t \le b_s$ if $t <s$.

Let $\omega$ be a fixed element of the set $\Omega$. We define $F_\omega(t):= b_t(\omega)$, $t \in \mathbb R$.  It is clear that $F_\omega$ is a nondecreasing, left continuous function such that $\lim_{t \to -\infty} F_\omega(t)=0$  and $\lim_{t \to \infty} F_\omega(t)=1$. By \cite[Thm 43.2]{Hal}, $F_\omega$ is a distribution function on the measurable space $(\mathbb R, \mathcal B(\mathbb R))$ corresponding to a unique probability measure $P_\omega$ on $\mathcal B(\mathbb R)$, that is $P_\omega((-\infty,t))=F_\omega(t)$ for every $t \in \mathbb R$. Define now a mapping $\xi: \mathcal B(\mathbb R) \to [0,1]^\Omega$ by $\xi(A)(\omega)=P_\omega(A)$, $A \in \mathcal B(\mathbb R)$, $\omega \in \Omega$.  In particular, we have $\xi((-\infty,t)) = b_t\in \mathcal T$ for any $t \in \mathbb R$.  To prove that every $\xi(A) \in \mathcal T$ for any $A \in \mathcal B(\mathbb R)$, let $\mathcal K=\{A \in \mathcal B(\mathbb R): \xi(A) \in \mathcal T\}$. Then $\mathcal K$ is a Dynkin system containing all intervals of the form $(-\infty, t)$ for $t \in \mathbb R$ which form a $\pi$-system.  Hence by  \cite[Thm 1.1]{Kal}, $\mathcal K$ is a $\sigma$-algebra of subsets, so that, $\mathcal K =\mathcal B(\mathbb R)$.

Therefore, $\xi$ is an observable on $\mathcal T$ and $x:=h\circ \xi$ is an observable on $M$ such that $x((-\infty, t)) = x_t$ for any $t \in \mathbb R$ which proves that $E$ satisfies OEP.
\end{proof}

\begin{example}\label{ex:3.4}
Let $\Omega$ be a non-empty set and $\mathcal D$ be a Dynkin system on $\Omega$. If $f:\Omega\to \mathbb R$ is a $\mathcal D$-measurable function, then the mapping $x_f:\mathcal B(\mathbb R) \to \mathcal D$ given by $x_f(A):=f^{-1}(A)$, $A \in \mathcal B(\mathbb R)$, is an observable on $\mathcal D$, in addition, $\mathcal S_f:=\{f^{-1}(A): A \in \mathcal B(\mathbb R)\}$ is a $\sigma$-algebra of subsets of $\Omega$.
If $s$ is a $\sigma$-additive state on $\mathcal D$, then $s_{x_f}(A):=s(f^{-1}(A))$, $A \in \mathcal B(\mathbb R)$, is a $\sigma$-additive measure on $ \mathcal B(\mathbb R)$.

In addition, $\mathcal D \in \mathcal{OEP}(EA)$.
\end{example}

\begin{proof}
Let $\{x_t: t \in \mathbb R\}$ be a system of subsets from $\mathcal D$ satisfying (3.2)--(3.4). Let $r_1,r_2,\ldots$ be any enumeration of the set real numbers of $\mathbb R$. Define a mapping $f: \Omega \to \mathbb R$ by
$f(\omega):= \inf\{r_j: \omega \in x_{r_j}\}$. Then $f^{-1}((-\infty, r_k))= \bigcup_{i: r_i<r_k} x_{r_i} = x_{r_k} \in \mathcal D$. So that, the set $\mathcal K=\{A \in \mathcal B(\mathbb R): f^{-1}(A) \in \mathcal D\}$ is a Dynkin system containing all intervals $(-\infty,r_k)$ which is a $\pi$-system. Hence, $\mathcal K=\mathcal B(\mathbb R)$, which proves $\mathcal D \in \mathcal{OEP}(EA)$.
\end{proof}

Let $x$ be a simple observable with $\sigma(x)=\{t_1,\ldots,t_n\}$ for some real numbers $t_1<\cdots <t_n$. Put $a_i=x(\{t_i\})$, $i=1,\ldots,n$, then $a_1+\cdots+a_n=1$, and for the spectral resolution of $x$, we have

$$
x((-\infty, t))= \left\{\begin{array}{ll} 0 & \mbox{if} \ t\le t_1,\\
a_1+\cdots + a_i & \mbox{if}\ t_i< t\le t_{i+1},\ i=1,\ldots,n-1,\\
1 & \mbox{if}\ t_n<t,
\end{array}
\right.
\eqno(3.5)
$$
for $t \in \mathbb R$. In particular, if $q_a$ is a question observable corresponding to an element $a\in E$, then the spectral resolution of $q_a$ is as follows

$$
q_a((-\infty, t))= \left\{\begin{array}{ll} 0 & \mbox{if} \ t\le 0,\\
a' & \mbox{if}\ 0< t\le 1,\\
1 & \mbox{if}\ 1<t,
\end{array}
\right.
\eqno(3.6)
$$
for $t \in \mathbb R$.

\section{Sharp Elements and Observables on Orthoalgebras}

We show that the set of sharp elements of a homogeneous monotone $\sigma$-complete effect algebra is a monotone $\sigma$-complete sub-effect algebra. It will describe the so-called orthoalgebraic skeleton. It allows to characterize sharp observables by a sharp spectral resolution. In addition, we show that every monotone $\sigma$-complete orthoalgebra with RIP satisfies the Observable Existence Property.

In \cite[Cor 4.4]{Jen}, there was proved that if $E$ is a homogeneous effect algebra, then $\Sh(E)$ is a sub-effect algebra of $E$. Now we extend this result for monotone $\sigma$-complete homogeneous effect algebras.

\begin{theorem}\label{th:3.5}
If $E$ is a monotone $\sigma$-complete homogeneous effect algebra, then $\Sh(E)$ is a monotone $\sigma$-complete homogenous sub-effect algebra of $E$.
\end{theorem}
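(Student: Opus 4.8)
The plan is to reduce the whole statement to a single closure property: if $a_1 \le a_2 \le \cdots$ is an increasing sequence in $\Sh(E)$ and $a=\bigvee_n a_n$ (which exists in $E$ by monotone $\sigma$-completeness), then $a\in\Sh(E)$. Granting this, everything else is immediate. By \cite[Cor 4.4]{Jen}, $\Sh(E)$ is already a sub-effect algebra of $E$ and is an orthoalgebra, hence homogeneous by item (3) of the list in Section 2, so no separate verification of homogeneity is needed. Moreover the order of a sub-effect algebra agrees with the order inherited from $E$: if $a,b\in\Sh(E)$ and $a\le b$ in $E$, then $b-a=(b'+a)'\in\Sh(E)$ since $b',a\in\Sh(E)$ and $\Sh(E)$ is closed under defined sums. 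Consequently the element $a=\bigvee_n a_n$ computed in $E$ is also the supremum of $\{a_n\}$ inside $\Sh(E)$, so $\Sh(E)$ is monotone $\sigma$-complete.

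To prove the claim, I fix a common lower bound $d$ of $a$ and $a'$ and show $d=0$; this forces $a\wedge a'=0$ and hence $a\in\Sh(E)$. Put $c_n:=a-a_n$, so that $a_n+c_n=a$ and $c_n\le a_n'$. Since $a_n\le a$ we have $a'\le a_n'$, hence $d\le a'\le a_n'$ for every $n$. Now the key move: apply homogeneity of $E$ to the two inequalities $d\le a_n+c_n\ (=a)$ and $d\le (a_n+c_n)'\ (=a')$ — the second is exactly the hypothesis $d\le a'$ already in hand. Homogeneity yields $d_1\le a_n$ and $d_2\le c_n$ with $d=d_1+d_2$. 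Then $d_1\le d\le a_n'$ and $d_1\le a_n$, so $d_1$ is a lower bound of both $a_n$ and $a_n'$; since $a_n$ is sharp, $a_n\wedge a_n'=0$, whence $d_1=0$ and $d=d_2\le c_n$. As $n$ was arbitrary, $d\le c_n=a-a_n$ for all $n$.

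The final squeeze is a cancellation argument. From $d\le a-a_n$ we get that $a_n+d$ is defined and $a_n+d\le a_n+c_n=a$, hence $a_n\le a-d$ for every $n$. Thus $a-d$ is an upper bound of $\{a_n\}$, and since $a=\bigvee_n a_n$ is the least one, $a\le a-d$. Writing $a+m=a-d$ and adding $d$ to both sides gives $a+(m+d)=a$, so $m+d=0$ by cancellativity of $+$ in an effect algebra, and therefore $d=0$.

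The only part of the argument that really has content is the middle paragraph: the idea is to decompose $d$ not against $a$ but against the pair $a_n,\ a-a_n$, exploiting that the homogeneity hypothesis $d\le(a_n+c_n)'$ coincides with the available assumption $d\le a'$, so that sharpness of each $a_n$ annihilates the $a_n$-component of $d$ uniformly in $n$. Everything else — monotonicity and cancellativity of $+$, and the fact that complements and defined sums of sharp elements are sharp — is routine. I expect the main (mild) obstacle to be making sure the supremum of $\{a_n\}$ taken in $E$ is genuinely the supremum in $\Sh(E)$, which is handled by the order-agreement remark above.
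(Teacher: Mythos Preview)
Your proof is correct and is genuinely simpler than the paper's. The paper first embeds the sequence $\{b_n\}$, the differences $a_n=b_n-b_{n-1}$, and $b'$ into a common block $B$ (a maximal sub-effect algebra with RDP), uses that each sharp $b_n$ is central in $B$, and then goes through three claims manipulating the relative lattice operations $\wedge_B,\vee_B$ to show that any $x\le b,b'$ must be $0$. In particular it relies on Jen\v{c}a's description of blocks as $\overline{M}=M$ to get $x\in B$ and then argues via the identity $x-(b_n\wedge_B x)=(b_n\vee_B x)-b_n$.

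You bypass the block machinery entirely: a single application of homogeneity to $d\le a_n+(a-a_n)$ and $d\le(a_n+(a-a_n))'$ plus sharpness of $a_n$ gives $d\le a-a_n$ for every $n$, and then a straightforward cancellation from $a=\bigvee_n a_n\le a-d$ finishes. This is more transparent and uses only the raw definition of homogeneity, not the structural covering by RDP-blocks; the paper's route, on the other hand, illustrates how the block decomposition localizes questions about sharp elements to a setting where they are automatically central, which is a theme it returns to elsewhere. Your remarks about order agreement between $\Sh(E)$ and $E$, and about homogeneity of $\Sh(E)$ following from its being an orthoalgebra, are also correct and match the paper's concluding sentence.
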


\begin{proof}
Let $\{b_n\}$ be a sequence of sharp elements of $E$  such that $\{b_n\}\nearrow b$, where $b \in E$. We have to show that $b$ is a sharp element of $E$. We define elements $a_1=b_1$, and $a_n= b_n-b_{n-1}$ for $n\ge 2$. Then the sequence $\{a_n\}$ is summable with sum $b=\sum_n a_n$. The set $M=\{b', a_1,a_2,\ldots\}$ is an internally compatible subset of $E$, so there is a block $B$ of $E$ containing $M$. Since $B$ is a sub-effect algebra of $E$ with RDP, we get $b \in B$ and every $b_n \in B$. Hence, $b=\bigvee_B b_n= \bigvee_n b_n$, where $\bigvee_B$ is the join taken in $B$. Since each $b_n$ is a sharp element of $E$, so is sharp in the block $B$ which satisfies RDP. This implies that every $b_n$ is a central element of the effect algebra $B$. Therefore, $b_n\wedge_B x$ exists in $B$ for each $x \in B$; here $\wedge_B$ means the meet taken in the block $B$, in contrast to $\wedge$, $\vee$  which denote meet and join taken in the whole $E$.

Given a subset $M$ of $E$, we define inductively a sequence $\{M_n:n\ge 0\}$ of subsets of $E$ as follows: $M_0:=M$, and for each $n\ge 1$, we put
$M_n=\{x\in E: x\le y,y' \mbox{ for some } y \in M_n\}\cup \{y-x: x \le y,y' \mbox{ for some } y \in M_n \}$, and for them we put $\overline{M}=\bigcup_{n=0}^\infty M_n$. If $M$ is a block of $E$, by \cite[Cor 3.8(b)]{Jen}, $M=\overline{M}$.

Let $x \le b,b'$. Since $B=\overline{B}$, $x\in B$. First we show that $b_n\wedge x=0=b_n\wedge_B x$ for each $n$. Let $z\le x,b_n$, then $z\le x\le b' \le b_n'$ so that $x=0$. We assert, that there is $x_0 \in B$ such that $b_n \wedge_B x \le x_0\le x,b$ for each $n$. At any rate, the element $x_0=0\in B$ is such an element. Now let $x_0$ be any element of $B$ such that $0=b_n\wedge_B x\le x_0\le x,b$.

Now we prove a series of Claims.

\vspace{2mm}\noindent
{\it Claim 1.} $x = (x - x_0) +x_0$, $b =
(b- x_0) +x_0,$ $(x-x_0) + (b- x_0) + x_0
\in E.$
\vspace{2mm}

Indeed, we have $x-(b_n\wedge_B x)=(b_n\vee_B x)-b_n$. Since $1=(b_n\vee_B x)'+((b_n\vee_B x)-b_n)+b_n$, we have $b_n \le (x-(b_n\wedge_B x))'\le (x-x_0)'$ and $b\le (x-x_0)'$ which gives $(x-x_0)+b \in E$. Clearly $x= (x-x_0)+x_0$, $b = (b-x_0)+x_0$. In addition, $ x,b \le (x-x_0)+b$.

\vspace{2mm}\noindent
{\it Claim 2.} $x_0=0$.
\vspace{2mm}

To prove the claim, it is sufficient to show that $\bigwedge_{B,n} (x -(b_n \wedge_B x))=x-x_0= \bigwedge_n (x -(b_n\wedge_B x))$.
It is clear that $x-x_0 \le x-(x\wedge_B b_n)$ for each $n\ge 1$. Let $d\in E$ be such that $d\le x-(x\wedge_B b_n)$ for each $n$. Therefore, $d \le b,b'$ so that $b \in B$.
Then $d\le x-(x\wedge_B b_n)= (x\vee_B b_n)-b_n$ and $d+b_n \le x\vee_B b_n \le (x-x_0)+b$ when we have used the end of the proof of Claim 1. Whence, $ b_n \le ((x-x_0)+b)-d$ and $b \le ((x-x_0)+b)-d$ which gives $b+d\le (x-x_0)+b$ and $d\le x-x_0$. Finally $x-x_0=\bigwedge_{B,n} (x -(x\wedge_B b_n))=\bigwedge_n (x -(b_n\wedge_B x))=x$ which gives $x_0=0$.

\vspace{2mm}\noindent
{\it Claim 3.} $x\wedge b =0= x\wedge_B b$.
\vspace{2mm}

Let for $z \in E$, we have $z \le x,b$. Since $z\le x\le b'$, we have $z \le b,b'$ and $b \in B$. Then $0=b_n\wedge_B x\le z \le x,b$ which by Claim 2 yields $z=0$.

Finally, since we have $x\le b,b'$, then by Claim 3, we have $x=x\wedge b=0$ and this proves that $b$ is a sharp element of $E$, and consequently, $\Sh(E)$ is a monotone $\sigma$-complete sub-effect algebra which is homogenous because $\Sh(E)$ is an sub-orthoalgebra of $E$ and every orthoalgebra is homogenous.
\end{proof}

\begin{remark}\label{re:3.6}
{\rm We note that if $E$ is a homogeneous effect algebra, then $\Sh(E)$ is the biggest orthoalgebra that is a sub-effect algebra of $E$ and every sharp element in the sub-effect algebra is a sharp element of $E$.  Indeed, if $E_0$ is any orthoalgebra that is a sub-effect algebra of $E$ such that $\Sh(E_0)\subseteq \Sh(E)$, then for each $x\in E_0$, we have $x\wedge x'$ exists in $E$ and is zero, so that $E_0\subseteq \Sh(E)$. We call the biggest orthoalgebra of $E$  the {\it orthoalgebraic skeleton} of $E$ if it exists in $E$. Under the conditions of Theorem \ref{th:3.5}, $\Sh(E)$ is a monotone $\sigma$-complete orthoalgebraic skeleton of $E$.

We note, as it was already said, that $\mathcal P(H)$ is the orthoalgebraic skeleton of $\mathcal E(H)$.
}
\end{remark}

Now we characterize sharp observables as follows.

\begin{proposition}\label{pr:4.11}
Let $x$ be an observable on a homogenous monotone $\sigma$-complete effect algebra $E$. The following statements are equivalent:

\begin{enumerate}

\item[{\rm (i)}] $x$ is a sharp observable.

\item[{\rm (ii)}] $x((-\infty,t)) \in \Sh(E)$ for each $t \in \mathbb R$.

\item[{\rm (iii)}] $x((-\infty,t)) \in \Sh(E)$ for each $t \in \mathbb Q$.
\end{enumerate}
\end{proposition}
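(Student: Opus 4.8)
The plan is to close the cycle $\mathrm{(i)}\Rightarrow\mathrm{(ii)}\Rightarrow\mathrm{(iii)}\Rightarrow\mathrm{(i)}$. The first two arrows are immediate: each $(-\infty,t)$ is a Borel set, so if $x$ is sharp then $x((-\infty,t))\in\Sh(E)$ for every real $t$, and restricting $t$ to the rationals gives $\mathrm{(iii)}$. All the substance is in $\mathrm{(iii)}\Rightarrow\mathrm{(i)}$, and the engine is Theorem~\ref{th:3.5}: $\Sh(E)$ is a monotone $\sigma$-complete sub-effect algebra of $E$. Two consequences of this will be used repeatedly: $\Sh(E)$ is closed under the partial operation $+$ (being a sub-effect algebra), and the supremum in $E$ of a nondecreasing sequence of sharp elements is again sharp, so such suprema computed in $\Sh(E)$ coincide with those computed in $E$.

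First I would upgrade $\mathrm{(iii)}$ to $\mathrm{(ii)}$. Fix $t\in\mathbb R$ and choose rationals $r_n\nearrow t$. By $(3.4)$ of Proposition~\ref{pr:3.2} we have $x((-\infty,t))=\bigvee_n x((-\infty,r_n))$, which is the supremum of a nondecreasing sequence of elements of $\Sh(E)$, hence lies in $\Sh(E)$ by Theorem~\ref{th:3.5}. Thus $\mathrm{(iii)}$ forces $x((-\infty,t))\in\Sh(E)$ for all real $t$.

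For $\mathrm{(ii)}\Rightarrow\mathrm{(i)}$ I would run the Dynkin-system (Sierpi\'nski) argument used already in Proposition~\ref{pr:3.2} and Theorem~\ref{th:3.3}. Put $\mathcal K=\{A\in\mathcal B(\mathbb R): x(A)\in\Sh(E)\}$. By $\mathrm{(ii)}$, $\mathcal K$ contains all intervals $(-\infty,t)$, and these form a $\pi$-system generating $\mathcal B(\mathbb R)$. Next I check that $\mathcal K$ is a Dynkin system: $\mathbb R\in\mathcal K$ since $x(\mathbb R)=1\in\Sh(E)$; $\mathcal K$ is closed under complementation since $x(\mathbb R\setminus A)=x(A)'$ and $\Sh(E)$ is closed under $'$; and if $A_1,A_2,\ldots\in\mathcal K$ are pairwise disjoint then, by additivity of $x$ and property (iii) of observables, $x\big(\bigcup_i A_i\big)=\bigvee_n\big(x(A_1)+\cdots+x(A_n)\big)$, where each finite partial sum lies in $\Sh(E)$ because $\Sh(E)$ is closed under $+$, and the supremum of the resulting nondecreasing sequence again lies in $\Sh(E)$ by Theorem~\ref{th:3.5}. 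Hence $\mathcal K$ is a Dynkin system containing a generating $\pi$-system, so by the Sierpi\'nski theorem \cite[Thm 1.1]{Kal} it is a $\sigma$-algebra, whence $\mathcal K=\mathcal B(\mathbb R)$; that is, $x(A)\in\Sh(E)$ for every Borel set $A$, i.e.\ $x$ is a sharp observable.

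The only genuine obstacle is ensuring the two stability properties of $\Sh(E)$ invoked above — closure under the partial sum $+$ and closure under suprema of nondecreasing sequences — and these are exactly what Theorem~\ref{th:3.5} provides; once one observes that suprema of nondecreasing sharp sequences are the same whether taken in $\Sh(E)$ or in $E$, the rest is the routine Dynkin-class bookkeeping already familiar from the earlier proofs.
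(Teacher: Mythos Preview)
Your proof is correct and follows the same overall architecture as the paper's: the trivial implications $\mathrm{(i)}\Rightarrow\mathrm{(ii)}\Rightarrow\mathrm{(iii)}$, the upgrade $\mathrm{(iii)}\Rightarrow\mathrm{(ii)}$ via Theorem~\ref{th:3.5} and a rational sequence $r_n\nearrow t$, and the Dynkin/Sierpi\'nski argument for $\mathrm{(ii)}\Rightarrow\mathrm{(i)}$ with $\mathcal K=\{A:x(A)\in\Sh(E)\}$.

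The one noteworthy difference is how closure of $\mathcal K$ under \emph{finite} disjoint unions is handled. The paper verifies directly that $x(A\cup B)$ is sharp by invoking homogeneity: if $a\le x(A\cup B)$ and $a\le x(A\cup B)'$, then $a$ splits as $a_1+a_2$ with $a_i\le x(A_i)$, and since $a_i\le x(A_i)'$ as well, each $a_i=0$. You instead simply quote that $\Sh(E)$ is a sub-effect algebra (hence closed under $+$), which is \cite[Cor~4.4]{Jen} and is subsumed in Theorem~\ref{th:3.5}. Your route is a bit cleaner, since the homogeneity work has already been packaged into that structural fact; the paper's route has the virtue of making the use of homogeneity visible at the point where it is needed. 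Either way, the countable step and the $\pi$--$\lambda$ conclusion are identical.
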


\begin{proof}
Of course (i) implies (ii), and (ii) does (iii).

(ii) $\Rightarrow$ (i). Let $\mathcal K:=\{A \in \mathcal B(\mathbb R): x(A) \in \Sh(E)\}$. We assert that $\mathcal K$ is a Dynkin system. Assume $A,B \in \mathcal K$, $A\cap B = \emptyset$. Let $a \in E$ be such that $a \le x(A\cup B), x((A\cup B)^c)$. Due to homogeneity of $E$, there are $a_1,a_2\in E$ with $a_1 \le x(A)$ and $a_2\le x(B)$ such that $a=a_1 +a_2$. Then $a_1\le a_1+a_2\le (x(A)+x(B))'\le x(A)'$ which yields $a_1=0$, and in a similar way, we have $a_2=0$. Hence, $a=0$, $x(A\cup B) \in \Sh(E)$, and $A\cup B \in \mathcal K$.

Now let $\{A_i\}$ be a sequence of mutually disjoint Borel subsets from $\mathcal K$. By the first part of the present proof, $B_n=A_1\cup \cdots \cup A_n \in \mathcal K$. Applying Theorem \ref{th:3.5}, we see that $A = \bigcup_n A_n = \bigcup_n B_n \in \mathcal K$.

Hence, $\mathcal K$ is clearly a Dynkin system containing all intervals of the form $(-\infty, t)$, $t \in \mathbb R$. These intervals form a $\pi$-system, hence, by the Sierpi\'nski Theorem, \cite[Thm 1.1]{Kal}, $\mathcal K$ is a $\sigma$-algebra, which proves $\mathcal K= \mathcal B(\mathbb R)$.

(iii) $\Rightarrow$ (ii). Let $t\in \mathbb R$. There is a sequence $\{r_n\}$ of rational numbers such that $\{r_n\}\nearrow t$. Applying Theorem \ref{th:3.5}, we see that $x((-\infty,t))\in \Sh(E)$.
\end{proof}

This allows us to show that any spectral resolution consisting of sharp elements of a monotone $\sigma$-complete homogeneous effect algebra always entails the corresponding sharp observable.

If $E_1$ is a sub-orthoalgebra of an orthoalgebra $E$ such that $E_1$ is a Boolean algebra in its own right, we call $E_1$ a {\it Boolean sub-orthoalgebra} of $E$. If in addition, $E_1$ is a Boolean $\sigma$-algebra in its own right, we call $E_1$ a {\it Boolean $\sigma$-sub-orthoalgebra} of $E$.

\begin{theorem}\label{th:3.7}
Let $x$ be an observable of a monotone $\sigma$-complete orthoalgebra $E$. Then the range $\mathcal R(x)$ is a Boolean $\sigma$-sub-orthoalgebra of $E$.
\end{theorem}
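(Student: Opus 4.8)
The plan is to show that $\mathcal R(x)$ is simultaneously a sub-orthoalgebra of $E$, closed under the countable monotone joins that exist in $E$, and a Boolean $\sigma$-algebra in its own right. First I would observe that $\mathcal R(x)$ is a sub-effect algebra of $E$: it contains $0 = x(\emptyset)$ and $1 = x(\mathbb R)$; it is closed under the orthocomplement since $x(A)' = x(\mathbb R\setminus A)$; and if $x(A) + x(B)$ is defined in $E$, then by property (vii) of observables $x(A) + x(B) = x(A\cup B) + x(A\cap B)$, and in fact one can choose representatives $A' = A\setminus B$, $B' = B\setminus A$ with $x(A') + x(B') = x(A'\cup B') \in \mathcal R(x)$; a small argument using $x(A) = x(A') + x(A\cap B)$ and the cancellation available in an effect algebra shows $x(A)+x(B)\in\mathcal R(x)$. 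Since $E$ is an orthoalgebra, any sub-effect algebra of $E$ is itself an orthoalgebra, so $\mathcal R(x)$ is a sub-orthoalgebra.

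Next I would establish that $\mathcal R(x)$ is closed under the monotone joins inherited from $E$ and is itself monotone $\sigma$-complete. Given $x(A_1) \le x(A_2) \le \cdots$ in $\mathcal R(x)$, the join $c = \bigvee_n x(A_n)$ exists in $E$; I want $c \in \mathcal R(x)$. The natural candidate is $x(B)$ for a suitable $B$, but the $A_n$ need not be increasing. The fix is to replace them: since $x(A_n) \le x(A_{n+1})$, one checks (using properties (iii) and (viii)) that $x(A_n\setminus A_{n+1}) = x(A_n) - x(A_n\wedge A_{n+1}) \le x(A_n) - x(A_n) $ forces a telescoping argument; more directly, set $B_n = A_1\cup\cdots\cup A_n$ and show by induction that $x(B_n) = x(A_n)$, using $x(A_{n}\cup A_{n-1}) = x(A_n)$ which follows from $x(A_{n-1}) \le x(A_n)$ together with properties (vii)--(ix). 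Then $B_n \nearrow B := \bigcup_n B_n$ and by axiom (iii) for observables $x(B) = \bigvee_n x(B_n) = \bigvee_n x(A_n) = c$, so $c \in \mathcal R(x)$ and the join in $\mathcal R(x)$ agrees with the join in $E$. Thus $\mathcal R(x)$ is a monotone $\sigma$-complete sub-effect algebra.

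It then remains to see that $\mathcal R(x)$, as an orthoalgebra in its own right, is a Boolean $\sigma$-algebra. I would first argue it is an orthomodular poset: by property (xi), every pair $x(A), x(B)$ is compatible, and more is true — $\mathcal R(x)$ is internally compatible as shown in the paragraph on property (xi) of observables. In particular, given $x(A)+x(B)$, $x(A)+x(C)$, $x(B)+x(C)$ all defined, the summability of the system $\{x(A(i_1,i_2,i_3))\}$ over the eight sign patterns shows $x(A)+x(B)+x(C)$ is defined; by the cited criterion (\cite[Thm 1.5.5]{DvPu}) $\mathcal R(x)$ is an OMP, so $x(A)+x(B) = x(A)\vee x(B)$ whenever the sum exists, and finite joins and meets exist in $\mathcal R(x)$. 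Distributivity (hence Booleanness) follows because any finite subfamily sits inside the finite Boolean algebra generated by the atoms $x(A(i_1,\ldots,i_n))$: explicitly, $x(A)\wedge (x(B)\vee x(C)) = x(A\cap(B\cup C)) = x((A\cap B)\cup(A\cap C)) = (x(A)\wedge x(B))\vee(x(A)\wedge x(C))$, where each meet and join is the one computed from disjoint-union representatives. Finally, a Boolean lattice that is monotone $\sigma$-complete is a Boolean $\sigma$-algebra (countable joins reduce to monotone joins of finite partial joins), giving the conclusion.

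\textbf{Main obstacle.} The delicate point is not the Boolean/distributive bookkeeping — that is forced once one works inside the finite Boolean algebra generated by cylinder atoms $x(A(i_1,\ldots,i_n))$ — but rather the closure of $\mathcal R(x)$ under monotone joins: one must reconcile the abstract join $\bigvee_n x(A_n)$ taken in $E$ with a join of an \emph{increasing} sequence of $x$-values coming from an increasing sequence of Borel sets, and this requires the identity $x(A\cup B)=x(B)$ when $x(A)\le x(B)$, which is property (ix) applied after the observation $x(A\setminus B)=0$. Getting this step right is what makes the monotone $\sigma$-completeness of $\mathcal R(x)$ genuine rather than merely the existence of joins in the ambient $E$.
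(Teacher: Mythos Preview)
Your outline is broadly correct and runs parallel to the paper's proof, but there is one genuine gap that recurs in two places: you never isolate and use the single orthoalgebra-specific fact that drives everything, namely that if $x(F)+x(G)$ is defined in $E$ then $x(F\cap G)=0$. The paper obtains this immediately from $x(F)+x(G)=x(F\setminus G)+x(G\cap F)+x(G\setminus F)+x(G\cap F)$ and the orthoalgebra axiom $c+c\in E\Rightarrow c=0$. You need this for closure under $+$ (what you call ``cancellation available in an effect algebra'' is not cancellation at all; ordinary effect-algebra cancellation $a+c=b+c\Rightarrow a=b$ does not give $x(A\cap B)=0$), and you need it again for the implication $x(A)\le x(B)\Rightarrow x(A\setminus B)=0$: this is \emph{false} for observables on general effect algebras (take any genuine POVM on $\mathcal E(H)$), so it cannot follow from the listed properties (vii)--(ix) alone. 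The correct argument is that $x(A)\le x(B)$ means $x(A)+x(\mathbb R\setminus B)$ exists, hence $x(A\cap(\mathbb R\setminus B))=x(A\setminus B)=0$ by the orthoalgebra fact above; then your property-(ix) step goes through and the $B_n=A_1\cup\cdots\cup A_n$ induction is fine.

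On the Boolean structure your route genuinely diverges from the paper's. The paper shows that $\mathcal R(x)$ satisfies RDP (using $x(A)\le x(B)\Rightarrow x(A)=x(A\cap B)$ to split $x(A)\le x(F)+x(G)=x(F\bigtriangleup G)$ along $F\setminus G$ and $G\setminus F$), and then invokes the structural result that in a monotone $\sigma$-complete effect algebra with RDP every sharp element is central, so $\mathcal R(x)=\Sh(\mathcal R(x))=C(\mathcal R(x))$ is a Boolean $\sigma$-algebra. Your alternative---embed any finite family $x(A_1),\ldots,x(A_n)$ in the finite Boolean algebra generated by the atoms $x(A(i_1,\ldots,i_n))$ and read off distributivity---also works, once you know $x(A)\vee_{\mathcal R(x)} x(B)=x(A\cup B)$ and $x(A)\wedge_{\mathcal R(x)} x(B)=x(A\cap B)$; these again come from the orthoalgebra fact above (the summands $x(A\setminus B)$ and $x(B\setminus A)$ are orthogonal, hence have meet $0$ since all elements are sharp, so the decomposition $x(A)=x(A\setminus B)+x(A\cap B)$, $x(B)=x(B\setminus A)+x(A\cap B)$ is a strong compatibility witness). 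Your approach is more elementary and self-contained; the paper's buys a cleaner one-line finish by quoting \cite[Thm 3.2, Thm 5.11]{Dvu2}.
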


\begin{proof}
(i) The range $\mathcal R(x)$ contains $0,1$ and is closed under the operation $'$.  Now let $x(G)+x(F)$ exist in $E$. Then $x(G)+x(F)=x(G\setminus F)+x(G\cap F)+x(F\setminus G)+x(G\cap F)$ which means $x(G\cap F)=0$. Therefore, $x(G)+x(F)=x((G\setminus F)\cup (F\setminus G))=x(G\cup F) \in \mathcal R(x)$ which shows that the range $\mathcal R(x)$ is a sub-orthoalgebra of $E$.

In particular, if $x(A)\le x(B)$, then $x(B)-x(A)=(x(B)'+x(A))'=x((B'\cup A)')=x(B\setminus A)$. Hence, $x(A\cup B)= x((A\cup B)\cap A')+x((A\cup B)\cap A)=x(B\setminus A)+x(A)=x(B)$. In a similar way, we can show $x(A)=x(A\cap B)$.

(ii) Now we show that $\mathcal R(x)$ is an orthomodular poset in its own right. To show that we have to verify that if $x(A)+x(B)$, $x(A)+x(C)$, $x(B)+x(C)$ exist in $\mathcal R(x)$ then $x(A)+x(B)+x(C)$ is defined in $\mathcal R(x)$. From the first part of the proof we have $x(A\cap B)=x(A\cap C)=x(B\cap C)=0$. Hence, $x(A)=x(A\setminus (B\cup C))$, $x(B)=x(B\setminus (A\cup C))$ and $x(C)=x(C \setminus (A\cup B))$ which yields $x(A\cup B \cup C)=x(A)+x(B)+x(C)\in \mathcal R(x)$. Then $a+b=a\vee_{\mathcal R(x)} b$ for $a,b \in \mathcal R(x)$, where $\vee_{\mathcal R(x)}$ is the join taken in $\mathcal R(x)$.

(iii) We establish that $\mathcal R(x)$ is monotone $\sigma$-complete. To show that, let $\{x(G_i)\}$ be a sequence of summable elements. According to (ii), we can show that $x(G_1)+\cdots +x(G_n)=x(G_1\cup\cdots \cup G_n)$. Hence, if $G =\bigcup_n G_n$, then $x(G) = \bigvee_n x(G_1\cup\cdots \cup G_n)=\sum_n x(G_n)$.

(iv) In this part we establish that $\mathcal R(x)$ satisfies RDP. Let $x(A) \le x(F)+x(G)$. Then $x(A)\le x(A)+x(F)= x((F\setminus G)\cup (G\setminus F))$. Hence by the end of (i), $x(A) = x(((F\setminus G)\cup (G\setminus F))\cap A)= x((F\setminus G)\cap A)\cup ((G\setminus F)\cap A))= x((F\setminus G)\cap A)+ x((G\setminus F)\cap A)$.

Finally, we have established that $\mathcal R(x)$ is a monotone $\sigma$-complete orthoalgebra with RDP, therefore, the set of sharp elements of $\mathcal R(x)$ coincides with the set of central elements of $\mathcal R(x)$, see \cite[Thm 3.2, Thm 5.11]{Dvu2}, but $\Sh(\mathcal R(x))=\mathcal R(x)$, so that $\mathcal R(x)$ is a Boolean $\sigma$-algebra in its own right.
\end{proof}

As we have seen, the range of any observable on a monotone $\sigma$-complete orthoalgebra $E$ is always a Boolean sub-orthoalgebra of $E$ that is a sub-orthoalgebra of $E$. For example, if $E$ is an orthoalgebra, given $a\in E$, the set $\{0,a,a',1\}$ is always a Boolean subalgebra of $E$, hence $E$ can be covered by a system of Boolean sub-orthoalgebras of $E$, see also \cite[Lem 3.2]{FGR}. In addition, if $E$ is a monotone $\sigma$-complete orthoalgebra, and $x$ is an observable defined by (3.0), then $E$ can be covered by a system of Boolean $\sigma$-sub-orthoalgebras of $E$.

\begin{lemma}\label{le:3.8}
{\rm (1)} Let $B$ and $B_1$ be two Boolean sub-orthoalgebras of an orthoalgebra $E$ such that $B \subseteq B_1$. If $a,b \in B$, then $a\vee_{B}b=a\vee_{B_1} b$ and $a\wedge_{B}b=a\wedge_{B_1} b$.

{\rm (2)} Any orthoalgebra $E$ that is not a Boolean algebra can be covered by the system of maximal Boolean sub-orthoalgebras of $E$.
\end{lemma}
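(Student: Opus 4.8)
The plan is to establish (1) first and then derive (2) from it by a Zorn's lemma argument. For (1) the crucial point will be that the partial addition $+$, the orthocomplementation ${}'$, the induced partial order $\le$, and the partial difference $-$ are \emph{absolute}, i.e.\ they take the same value whether computed in $B$, in $B_1$, or in $E$; this is because $B$ and $B_1$ are sub-effect algebras of $E$ (for instance, if $a,b\in B$ and $a\le b$ in $E$, then $b'+a\in B$ by the sub-effect-algebra closure property, hence $b-a=(b'+a)'\in B$, so $a\le b$ already holds in $B$). Only the internal lattice operations $\vee$ and $\wedge$ are a priori not absolute, and showing that they in fact are is the whole content of (1).

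The key lemma I would isolate is: for $a,b\in B$, the join $a\vee_B b$ is the unique element $c\in B$ with $a\le c$, $b\le c$ and $c-a\le b$. The forward implication comes from distributivity in the Boolean algebra $B$: taking $c=a\vee_B b$ one has $c-a=c\wedge_B a'=(a\vee_B b)\wedge_B a'=b\wedge_B a'\le b$. For the converse, if $c\in B$ satisfies the three conditions then $a\vee_B b\le c$ because $c$ is an upper bound of $\{a,b\}$, while from $a\le a\vee_B b$, $c-a\le b\le a\vee_B b$ and the equality $a+(c-a)=c$ — a sum taking place inside $B$ — together with the fact that every element of a Boolean algebra is central, hence principal, we get $c\le a\vee_B b$; so $c=a\vee_B b$. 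I would then put $c_0:=a\vee_B b\in B\subseteq B_1$ and observe that, by absoluteness, $c_0$ also satisfies $a\le c_0$, $b\le c_0$, $c_0-a\le b$ inside $B_1$; uniqueness applied in $B_1$ then forces $c_0=a\vee_{B_1}b$. The assertion for meets follows from De Morgan inside each Boolean algebra: $a\wedge_B b=(a'\vee_B b')'=(a'\vee_{B_1}b')'=a\wedge_{B_1}b$.

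For (2) I would first note that every $a\in E$ lies in some Boolean sub-orthoalgebra: if $a\in\{0,1\}$ take $\{0,1\}$, and otherwise $a\ne a'$ (in an orthoalgebra $a\le a'$ would make $a+a$ defined, whence $a=0$), so $\{0,a,a',1\}\cong 2^2$ is a Boolean algebra and is easily seen to be a sub-orthoalgebra of $E$. Then I would apply Zorn's lemma to the poset of Boolean sub-orthoalgebras of $E$ ordered by inclusion. The one step needing attention is that the union $B=\bigcup_i B_i$ of a chain of Boolean sub-orthoalgebras is again one: it is obviously a sub-orthoalgebra of $E$, and for $a,b\in B$, which lie in a common $B_i$, part (1) shows that $a\vee_{B_j}b$ and $a\wedge_{B_j}b$ are independent of $j\ge i$ and thus define elements of $B$; a routine check identifies these with the supremum and infimum of $\{a,b\}$ in $B$, and since the Boolean identities already hold in any single $B_i$ containing the relevant elements, $B$ is a Boolean algebra, hence a Boolean sub-orthoalgebra of $E$. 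Consequently every Boolean sub-orthoalgebra — in particular each $\{0,a,a',1\}$ — is contained in a maximal one, and $E$ is the union of its maximal Boolean sub-orthoalgebras. (The hypothesis that $E$ is not a Boolean algebra only serves to exclude the degenerate case in which $E$ is its own unique maximal Boolean sub-orthoalgebra.)

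The main obstacle is the absoluteness argument in part (1): one has to resist working with $\vee_B$ and $\vee_{B_1}$ directly and instead pin down $\vee_B$ entirely in terms of $+$, ${}'$ and $\le$, which are manifestly unchanged when the sub-orthoalgebra is enlarged. Once (1) is in hand, (2) is a standard maximality argument.
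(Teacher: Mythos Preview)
Your proposal is correct, and for part~(2) it is essentially identical to the paper's: both use Zorn's lemma on the poset of Boolean sub-orthoalgebras, invoking part~(1) to show that the union of a chain is again a Boolean sub-orthoalgebra. Your remark that the hypothesis ``$E$ is not a Boolean algebra'' is inessential is also well taken; the paper uses it only to observe that $B_0\ne E$, which plays no role in the conclusion.

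For part~(1) the underlying idea is the same---express $a\vee_B b$ purely in terms of the absolute data $+$, ${}'$, $\le$---but the executions differ. The paper applies RDP inside the Boolean algebra $B$ to the identity $a+a'=b+b'$, obtaining a $2\times 2$ refinement $c_{11},c_{12},c_{21},c_{22}\in B$ with $a=c_{11}+c_{12}$, $b=c_{11}+c_{21}$, and then shows that $c:=c_{11}+c_{12}+c_{21}$ is the join of $a,b$ in $B$; since the $c_{ij}$ already lie in $B_1$, the identical computation gives $c=a\vee_{B_1}b$. Your route instead isolates the characterization ``$a\vee_B b$ is the unique $c\in B$ with $a,b\le c$ and $c-a\le b$'' and transfers that. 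Both arguments are short; the paper's has the advantage of making the pieces explicit (one reads off $a\wedge_B b=c_{11}$ directly), while yours is arguably cleaner in that it avoids introducing the four auxiliary elements and works with a single intrinsic property of the join.
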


\begin{proof}
(1) Let $B \subseteq B_1$ and let $a,b \in B$. Since $B$ is in fact a sub-orthoalgebra of $E$ such that if $x+y$ exists for $x,y \in B$, then $x+y = x\vee_B y$. In addition, RDP holds in $B$ because $B$ is a Boolean algebra. Since $a+a'=b+b'$, there are four elements $c_{11},c_{12},c_{21},c_{22}\in B$ such that $a = c_{11}+c_{12},$ $a'= c_{21}+c_{22},$ $b= c_{11} + c_{21}$ and $b'= c_{12}+c_{22}$. Then the element $c:=c_{11}+c_{12}+c_{21}$ exists in $B$ so that $c=a\vee_B c_{21}=b\vee_B c_{12}$. If $x\in B$ is an upper bound for $a,b$, then $x\ge a,b,c_{12},c_{21}$ so that $x\ge c$ which proves $c= a\vee_B b$. Since the elements $c_{11},c_{12},c_{21},c_{22}$ are also in $B_1$, we have $c \in B_1$ and $c=a\vee_{B_1} b$.

(2) Let $x \in E$ be an arbitrary element and let $\{B_t:t \in T\}$ be a chain of Boolean sub-orthoalgebras of $E$ containing $x$, where each $B_t\ne E$; such Boolean sub-orthoalgebras exist because $\{0,x,x',1\}$ is such a Boolean sub-orthoalgebra. If we set $B_0= \bigcup_t B_t$, then $B_0$ is clearly a sub-orthoalgebra of $E$. If $a,b \in B_0$, then there is $t_0$ such that $a,b\in B_{t_0}$, and let $c=a\vee_{B_{t_0}} b$. We assert that $c$ is the join in the whole $B_0$. Indeed, let $x \in B_0$ be an upper bound of $a,b$. There is $t_1$ such that $a,b,x\in B_{t_1}$. By the first part of the present proof, $c=a\vee_{B_{t_1}} b\le x$ and $c=a\vee _{B_0} b$. Hence, $B_0$ is a Boolean algebra and $B_0\ne E$ because $E$ is not a Boolean algebra.
Applying Zorn's lemma, we have that there is a maximal Boolean sub-orthoalgebra $E$. Consequently, $E$ can be covered by the set of maximal Boolean sub-orthoalgebras of $E$.
\end{proof}

Maximal Boolean sub-orthoalgebras of an orthoalgebra in which RDP holds are maximal sub-orthoalgebras with RDP and vice-versa. Indeed, every element of a maximal sub-orthoalgebra with RDP is central, so that,  blocks are maximal Boolean sub-orthoalgebras of the orthoalgebra. In other words, maximal Boolean sub-orthoalgebras of orthoalgebras play the same role as blocks do in homogeneous effect algebras.

For example, the well-known orthoalgebra, the Fano plane, see e.g. \cite[p. 101]{DvPu}, consists of seven Boolean sub-orthoalgebras with eight elements; they are only blocks.

It is worthy of recalling that if $B$ is any Boolean sub-orthoalgebra of an orthoalgebra $E$, then (i) for each $a,b\in B$, $a\leftrightarrow b$, (ii) if $a,b\in B$, $a=a_1+c$, $b=b_1+c$ and $a_1+b_1+c$ exists in $E$, where $a_1,b_1,c\in B$, then $a\vee _B b= a_1+b_1+c$, $a\wedge_B b=c$, and $a_1\wedge_B b_1=0$.

\begin{lemma}\label{le:3.9}
{\rm (1)} Let $B$ be a Boolean sub-orthoalgebra of an orthoalgebra $E$ satisfying \RIP. Then any two elements $a,b\in B$ are strongly compatible, $a\vee b$ exists in $E$ and $a\vee b=a\vee_B b$, and $a\wedge b= a\wedge_B b$.

{\rm (2)} If $a,b$ are arbitrary elements of any orthoalgebra $E$ with {\rm RIP} such that $a+b$ exists in $E$. Then $a$ and $b$ are strongly compatible, $a\vee b$ exists, and $a\vee b =a+b$.
\end{lemma}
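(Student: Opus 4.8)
The two parts are closely linked, and I would prove (2) first and then derive (1), since (1) is just the specialization of the $a+b$-exists situation to two elements of a common Boolean sub-orthoalgebra. For part (2), suppose $a+b$ exists in $E$. The first thing to observe is that in any orthoalgebra $a\wedge b=0$: if $c\le a,b$ then $c\le a\le b'$ (because $a+b$ exists means $a\le b'$), so $c\le b,b'$, whence $c=0$ since $c+c$ exists forces $c=0$ in an orthoalgebra. Thus the candidate meet is already $0$ and strong compatibility reduces to showing $a+b=a\vee b$, because then the triple $(a,b,c)$ with $c=0$ witnesses $a\stackrel{\mathrm{c}}{\longleftrightarrow}b$ with $a\wedge b=0$ trivially. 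So the real content is: the element $a+b$ is the join of $a$ and $b$ in $E$.

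To get $a\vee b=a+b$ I would use RIP. Clearly $a,b\le a+b$, so $a+b$ is an upper bound; I must show it is the least one. Let $u$ be any upper bound, $a\le u$ and $b\le u$. I want to produce an element sitting between the pair $\{a,b\}$ below and the pair $\{a+b,u\}$ above and show it forces $a+b\le u$. The clean way: apply RIP to the four inequalities coming from $x_1=a$, $x_2=b$, $y_1=a+b$, $y_2=u$ — we need $a\le a+b$, $a\le u$, $b\le a+b$, $b\le u$, all of which hold, so there is $z$ with $a\le z\le a+b$ and $b\le z\le a+b$ and $z\le u$. Now $a\le z\le a+b$ gives $z=a+d$ for some $d\le b$ (using $z-a\le (a+b)-a=b$), and $b\le z$ gives $b\le a+d$; since $d\le b$ this should pin down $z=a+b$ by an orthoalgebra argument: from $b\le a+d$ and $d\le b\le a+d$ one extracts that $b-d\le a$, and combining with $d\le b$ and the fact that $a\wedge b=0$-type reasoning, $z=a+d$ together with $a+b$ defined forces $d=b$, i.e. $z=a+b$. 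Hence $a+b=z\le u$, proving $a+b=a\vee b$. (One must double-check the step $z=a+b$ carefully; it is the place where "orthoalgebra" rather than merely "effect algebra with RIP" is used, via cancellation and the no-$2a$ property.)

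For part (1): given $a,b$ in a Boolean sub-orthoalgebra $B$ of $E$, the remark preceding the lemma already tells us $a\leftrightarrow b$ inside $B$, so there are $a_1,b_1,c\in B$ with $a=a_1+c$, $b=b_1+c$, $a_1+b_1+c\in E$, $a_1\wedge_B b_1=0$, $a\vee_B b=a_1+b_1+c$, $a\wedge_B b=c$. I would first reduce to the case $c=0$ by passing to the interval $[c,1]$ (or equivalently to $a_1=a-c$, $b_1=b-c$ which are orthogonal), prove everything for $a_1,b_1$, and then add $c$ back. For the orthogonal elements $a_1,b_1$, part (2) applies directly: $a_1\vee b_1$ exists in $E$ and equals $a_1+b_1=a_1\vee_B b_1$, and $a_1\wedge b_1=0=a_1\wedge_B b_1$; strong compatibility of $a_1,b_1$ is immediate. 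Reassembling, $a\vee b$ exists in $E$: one checks $a\vee b=(a_1\vee b_1)+c=(a_1+b_1+c)=a\vee_B b$, using that $c$ is principal/central in $B$ so addition of $c$ is order-preserving and that $c\le a,b\le a\vee_B b$; an upper bound $u$ of $a,b$ in $E$ satisfies $u\ge c$, so $u-c\ge a_1,b_1$ (after verifying $u-c$ is defined, which follows since... this needs the compatibility of $u$ with $c$, and here one may need to be slightly careful — but since $c\le a\le u$, $u-c$ exists), hence $u-c\ge a_1\vee b_1=a_1+b_1$, so $u\ge a_1+b_1+c=a\vee_B b$. Finally $a\wedge b=a\wedge_B b=c$ follows dually (or from $(a\vee b')' $-type manipulation), and strong compatibility of $a,b$ is witnessed by $(a_1,b_1,c)$ itself since $a_1\wedge b_1=0$ by the $c=0$ case.

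\textbf{Main obstacle.} The delicate point is the cancellation step in part (2) showing $z=a+b$ from $a\le z\le a+b$ and $b\le z$ — i.e. that RIP plus the orthoalgebra axiom (no nonzero $a$ with $a+a$ defined) genuinely forces the interpolant to be the full sum rather than something strictly smaller. The secondary nuisance is the bookkeeping in part (1) when adding $c$ back: one must be sure that for an arbitrary upper bound $u\in E$ of $a$ and $b$ the difference $u-c$ is defined and dominates both $a_1$ and $b_1$, which rests only on $c\le a\le u$ (so $u-c$ exists) and on $a_1=a-c\le u-c$, a monotonicity statement for subtraction that is standard in effect algebras. Neither step is deep, but both deserve explicit verification rather than being waved through.
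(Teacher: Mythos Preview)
Your proof is correct, and your worry about the ``delicate'' cancellation step is unfounded: from $a\le z\le a+b$ and $b\le z$ you get $d:=z-a\le b$ and $b\le a+d$, hence $b-d\le a$; since also $b-d\le b$ and $a\wedge b=0$ (which you already established), this forces $b=d$ and $z=a+b$. Likewise the bookkeeping in part (1) goes through exactly as you say, using only the effect-algebra identity $c\le a\le u\Rightarrow a-c\le u-c$.

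Your route, however, is genuinely different from the paper's. The paper proves (1) first and then derives (2) from it, whereas you reverse the order. For the key step --- that $a+b$ defined in an orthoalgebra with RIP forces $a\vee b=a+b$ --- the paper does not argue directly but invokes an external result, \cite[Prop~2.2]{CIJTP}; your RIP-plus-cancellation argument is a self-contained replacement for that citation. For (2), the paper embeds $a,b$ into a Boolean sub-orthoalgebra by taking the range of the simple observable built from the summable system $a,b,(a+b)'$ and appealing to Theorem~\ref{th:3.7}, then applies (1); you instead handle (2) bare-handed and use it to drive (1). In part (1)(ii) the paper also takes a slightly slicker shortcut: rather than proving $a_1\vee b_1=a_1+b_1$ and adding $c$ back, it observes that $a+b_1$ and $b+a_1$ both exist (equal to $a_1+b_1+c$), applies the orthogonal case to the pairs $(a,b_1)$ and $(b,a_1)$, and reads off $a\vee b$ from $a\vee b_1=a_1+b_1+c\ge b$. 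Your approach is more elementary and avoids both the external citation and the machinery of Theorem~\ref{th:3.7}; the paper's approach has the virtue of tying the lemma to the observable framework developed earlier.
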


\begin{proof}
(1) As it was just said above, every two elements of $B$ are compatible.

(i) Assume that $a+b$ is defined in $E$. Then $a\vee_B b= a+b$. Since $a$ and $b$ are compatible in $B$, $a=a_1+c$, $b=b_1+c$, and $a_1+b_1+c \in B$ for some $a_1,b_1,c \in B$. Then $a_1+b_1+c=a\vee_B b = a+b$, so that $c=0$. Let for some $x \in E$, we have $x\le a,b$. Then $x\le a\le b'$ gives $x=0$ which implies $a_1\wedge b_1$ exists in $E$ and $a\wedge b=0$. Applying \cite[Prop 2.2]{CIJTP}, we have that $a\vee b$ exists in $E$ and $a\vee b=a+b$. But $a\vee _B b = a+b$.

(ii) Now let $a,b\in B$ be arbitrary, then $a=a_1+c$, $b=b_1+c$, $a_1+b_1+c= a\vee_B b$. By the first part of the present proof, $a\vee b_1=a_1+b_1+c= b\vee a_1$ which yields $a\vee b$ exists in $E$, and $a\vee b=a_1+b_1+c=a\vee _B b$. Because $a$ and $b$ are arbitrary elements of $B$, then $a'\vee b'$ exists in $B$ so that $a\wedge b=(a'\vee b')'$ is defined in $B$.
This yields that $a\wedge b$ and $a\vee b$ are defined in $B$ and they coincide with $a\wedge_B b$ and $a\vee_B b$, respectively. This in particular implies that if $a=a_1+c$, $b=b_1+ c$ and $a_1+b_1+c$ exists in $B$ for some $a_1,b_1,c \in B$, then $a_1\wedge b_1 = a_1\wedge_B b_1 =0$ which implies $a$ and $b$ are strongly compatible.

(2) Set $a_1=a$, $a_2=b$, $a_3=(a+b)'$ and define an observable $x$ via (3.0). Then $B=\mathcal R(x)$ is by Theorem \ref{th:3.7} a Boolean sub-orthoalgebra of $E$. Hence, by (1), $a$ and $b$ are strongly compatible, so that $a\vee_B b = a+b = a\vee b$.
\end{proof}

The notion of an observable was defined as a mapping $x$ from $\mathcal B(\mathbb R)$ into a monotone $\sigma$-complete effect $E$. We can extend this notion also for an arbitrary effect algebra $E$ assuming that $x:\mathcal B(\mathbb R)\to E$ such that (i) $x(\mathbb R)=1$, (ii) $x(A\cup B)=x(A)+x(B)$ whenever $A,B \in \mathcal B(\mathbb R)$ are disjoint, and (iii) if $\{A_i\}\nearrow A$, then $\bigvee_i x(A_i)$ is defined in $E$ and $x(A)=\bigvee_i x(A_i)$. A mapping $x:\mathcal B(\mathbb R)\to E$ satisfying (i) and (ii) is said to be an {\it f-observable} (f stands for finitely additive). It has analogous properties as do observables stated in the beginning of the third section just after definition of an observable besides (vi).

\begin{lemma}\label{le:3.10}
Let $E$ be an orthoalgebra with {\rm RIP} and let $a,b \in E$. The following statements are equivalent:
\begin{enumerate}

\item[{\rm (i)}] The elements $a$ and $b$ are strongly compatible.

\item[{\rm (ii)}] The elements $a$ and $b$ are  compatible.

\item[{\rm (iii)}]
There is an f-observable $x$ of $E$ and two Borel sets $A$ and $B$ of $\mathbb R$ such that $a=x(A)$ and $b= x(B)$.

\item[{\rm (iv)}] There is a Boolean sub-orthoalgebra of $E$ containing $a$ and $b$.
\end{enumerate}
\end{lemma}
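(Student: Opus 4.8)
The plan is to prove the chain of implications $\mathrm{(i)}\Rightarrow\mathrm{(ii)}\Rightarrow\mathrm{(iii)}\Rightarrow\mathrm{(iv)}\Rightarrow\mathrm{(i)}$. The first implication is immediate: strong compatibility implies compatibility, as already observed in Section~2. For $\mathrm{(ii)}\Rightarrow\mathrm{(iii)}$, take a witness of $a\leftrightarrow b$, i.e.\ elements $a_1,b_1,c\in E$ with $a=a_1+c$, $b=b_1+c$ and $a_1+b_1+c\in E$, and put $d:=(a_1+b_1+c)'$, so that $(a_1,b_1,c,d)$ is a summable family with sum $1$. Choosing four mutually distinct reals $t_1,t_2,t_3,t_4$, the formula $(3.0)$ applied to this data obviously defines an f-observable $x$ on $E$; with $A:=\{t_1,t_3\}$ and $B:=\{t_2,t_3\}$ one gets $x(A)=a_1+c=a$ and $x(B)=b_1+c=b$, which is $\mathrm{(iii)}$.

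For $\mathrm{(iii)}\Rightarrow\mathrm{(iv)}$ I would rerun the finitary part of the proof of Theorem~\ref{th:3.7}. Since an f-observable enjoys all the finitely additive properties of an observable, parts (i), (ii) and (iv) of that proof go through verbatim and show that $\mathcal R(x)$ is a sub-orthoalgebra of $E$ satisfying RDP; note that the monotone $\sigma$-completeness hypothesis of Theorem~\ref{th:3.7} was used only in its part (iii), which we omit here. A sub-orthoalgebra of an orthoalgebra is again an orthoalgebra, and in an orthoalgebra every element is sharp: if $z\le u,u'$, write $u=z+u_1$, $u'=z+u_2$; then $(z+u_1)+(z+u_2)=1$, so by commutativity and associativity of the partial addition $z+z$ exists, hence $z=0$ and $u\wedge u'=0$. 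Therefore $\Sh(\mathcal R(x))=\mathcal R(x)$, and since $\mathcal R(x)$ has RDP, \cite[Thm~3.2]{Dvu2} gives that $\mathcal R(x)$ is a Boolean algebra in its own right, i.e.\ a Boolean sub-orthoalgebra of $E$ containing $a=x(A)$ and $b=x(B)$, which is $\mathrm{(iv)}$.

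Finally, $\mathrm{(iv)}\Rightarrow\mathrm{(i)}$ is exactly Lemma~\ref{le:3.9}(1): any two elements of a Boolean sub-orthoalgebra of an orthoalgebra with RIP are strongly compatible. This is the only step that invokes the hypothesis RIP.

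The substantive point of the lemma is the passage from compatibility to strong compatibility, which fails for general effect algebras; in this scheme it is entirely localized in Lemma~\ref{le:3.9}(1), and everything else is bookkeeping. The one thing requiring care is to double-check that the argument of Theorem~\ref{th:3.7}, written for observables on monotone $\sigma$-complete orthoalgebras, really does specialize to f-observables on an orthoalgebra that need not be monotone $\sigma$-complete, once its countably additive part is stripped out; this is routine because the identities used in Theorem~\ref{th:3.7}(i),(ii),(iv) are all finitary.
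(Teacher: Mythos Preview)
Your proof is correct and follows essentially the same cycle of implications as the paper's proof. If anything, your treatment of $\mathrm{(iii)}\Rightarrow\mathrm{(iv)}$ is more careful than the paper's, which simply cites Theorem~\ref{th:3.7} without remarking that it must be adapted from observables on monotone $\sigma$-complete orthoalgebras to f-observables on a general orthoalgebra; your explicit observation that only the finitary parts (i), (ii), (iv) of that proof are needed fills this small gap.
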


\begin{proof}
Evidently (i) implies (ii).

(ii) $\Rightarrow$ (iii). There are elements $a_1,b_1,c \in E$ such that $a=a_1+c$, $b=b_1+c$, and $a_1+b_1 +c$ is defined in $E$. %Then $a\vee b=a_1+b_1+c$, see \cite[Prop 2.2]{CIJTP}.
Take summable elements $a_1,b_1,c,(a_1+ b_1+c)'$ and define an f-observable $x$ via (3.0). Then $a,b \in \mathcal R(x)$.

(iii) $\Rightarrow$ (iv). This follows easily from Theorem \ref{th:3.7}.

(iv) $\Rightarrow$ (i). Let $B$ be a Boolean sub-orthoalgebra of $E$ containing $a$ and $b$. By Lemma \ref{le:3.9}, $a$ and $b$ are strongly compatible.
\end{proof}

\begin{lemma}\label{le:3.11}
Let $a$ be an element of an orthoalgebra $E$ satisfying \RIP, and let $B(a)=\{b \in E: b \stackrel{\mbox{\rm c}}{\longleftrightarrow}a\}$. Then $B(a)$ is a sub-orthoalgebra of $E$ that is an orthomodular poset.

If, in addition, $E$ is monotone $\sigma$-complete, then $B(a)$ is also monotone $\sigma$-complete.

Every orthoalgebra with {\rm RIP} is an orthomodular poset.
\end{lemma}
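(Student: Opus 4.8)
The plan is to prove the three assertions, all of which rest only on Lemmas~\ref{le:3.9} and \ref{le:3.10}. I would start with the last. By the characterization \cite[Thm~1.5.5]{DvPu}, \cite[Thm~2.12]{FGR}, an orthoalgebra is an orthomodular poset as soon as the existence of $a+b$, $a+c$, $b+c$ forces $a+b+c$ to be defined, so suppose these three sums exist in $E$. From $a+c,\,b+c\in E$ we get $a\le c'$ and $b\le c'$, i.e.\ $c'$ is an upper bound of $\{a,b\}$. By Lemma~\ref{le:3.9}(2), the existence of $a+b$ gives $a+b=a\vee b$, hence $a+b=a\vee b\le c'$, i.e.\ $c\le(a+b)'$, so $(a+b)+c$ is defined. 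Thus every orthoalgebra with \RIP\ is an orthomodular poset.

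For the first assertion I would check that $B(a)$ is a sub-orthoalgebra. By Lemma~\ref{le:3.10}, on $E$ compatibility and strong compatibility coincide, so $B(a)=\{b\in E:b\leftrightarrow a\}$, and $b\leftrightarrow a$ iff some Boolean sub-orthoalgebra of $E$ contains $a$ and $b$. Plainly $0\in B(a)$, and if $b\in B(a)$ with a Boolean sub-orthoalgebra $B\ni a,b$, then $b'\in B$, so $b'\in B(a)$; thus $B(a)$ is closed under $'$ and $1=0'\in B(a)$. The substantial point is closure under $+$. Let $b,d\in B(a)$ with $b+d$ defined, and choose Boolean sub-orthoalgebras $B_b\ni a,b$ and $B_d\ni a,d$; inside these Boolean algebras $b=(b\wedge a)+(b\wedge a')$ and $d=(d\wedge a)+(d\wedge a')$ with orthogonal summands, and by Lemma~\ref{le:3.9}(1) the meets are those of $E$, so writing
$$r=b\wedge a,\quad p=b\wedge a',\quad t=d\wedge a,\quad s=d\wedge a'$$
we have $b=r+p$, $d=t+s$ with $r,t\le a$ and $p,s\le a'$. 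Since $b+d\in E$ we have $b\le d'$, whence $r\le b\le d'\le t'$, so $r\perp t$, and likewise $p\perp s$; consequently $r,p,s,t$ are summable and, regrouping, $b+d=(r+t)+(p+s)$. By Lemma~\ref{le:3.9}(2), $r+t=r\vee t\le a$ (as $a$ bounds $r,t$) and $p+s=p\vee s\le a'$. Putting $v:=r+t$ and $w:=p+s$ we get $a=(a-v)+v$, $b+d=w+v$, and the three-fold sum $(a-v)+w+v$ is defined, being equal to $a+w$ with $w\le a'$. Hence $a\leftrightarrow b+d$, so $b+d\in B(a)$ by Lemma~\ref{le:3.10}. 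Thus $B(a)$ is a sub-orthoalgebra of the orthomodular poset $E$; and if $e,f,g\in B(a)$ have all pairwise sums defined, then $(e+f)+g$ is defined in $E$ (since $E$ is an orthomodular poset) and lies in $B(a)$ by closure under $+$, so $B(a)$ is itself an orthomodular poset.

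For the second assertion, let $E$ be monotone $\sigma$-complete and $\{e_n\}$ a non-decreasing sequence in $B(a)$, with $e=\bigvee_n e_n$ its supremum in $E$; it suffices to show $e\in B(a)$, for then $e$ is automatically the least upper bound of $\{e_n\}$ in $B(a)$. Writing $e_n=r_n+p_n$ with $r_n=e_n\wedge a\le a$ and $p_n=e_n\wedge a'\le a'$ as above, monotonicity of the meet makes $\{r_n\}$ and $\{p_n\}$ non-decreasing, so $r:=\bigvee_n r_n\le a$ and $p:=\bigvee_n p_n\le a'$ exist in $E$ and $r\perp p$. Using Lemma~\ref{le:3.9}(2), $e_n=r_n+p_n=r_n\vee p_n\le r\vee p=r+p$, so $r+p$ bounds $\{e_n\}$ and $e\le r+p$; conversely $r,p\le e$ and, as $r+p=r\vee p$ and $e$ bounds $r,p$, also $r+p\le e$. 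Hence $e=r+p$ with $r\le a$, $p\le a'$, and exactly as before $a\leftrightarrow e$, so $e\in B(a)$ by Lemma~\ref{le:3.10}. Therefore $B(a)$ is monotone $\sigma$-complete.

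The hard part is the closure of $B(a)$ under $+$: the witnesses of compatibility of $b$ and of $d$ with $a$ live in two a~priori unrelated Boolean sub-orthoalgebras, so one cannot simply argue inside a single block. The device is to project everything onto the fixed element $a$ through the canonical splittings $b=r+p$, $d=t+s$ along $a$ and $a'$, and then invoke Lemma~\ref{le:3.9}(2) twice — once to pin $r+t$ below $a$, once to pin $p+s$ below $a'$ — so that $b+d$ acquires a common part with $a$; the same canonical splitting, together with monotonicity of the meet, is what drives the limit in the $\sigma$-complete case. Everything else is routine manipulation with the associativity of the partial sum.
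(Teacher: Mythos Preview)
Your proof is correct, and it takes a genuinely different route from the paper's. The paper disposes of the two substantive steps by citation: closure of $B(a)$ under $+$ is obtained from \cite[Prop~2.9]{CIJTP} (strong compatibility is preserved under joins of strongly compatible pairs), and monotone $\sigma$-completeness of $B(a)$ from \cite[Prop~3.2]{270}; the last assertion is then derived as the special case $B(0)=E$. You instead prove the third assertion first by a direct argument, and then replace both external citations by the single device of the canonical splitting $b=(b\wedge a)+(b\wedge a')$, which is available once Lemma~\ref{le:3.9}(1) identifies the Boolean meets with meets in $E$. This makes your argument entirely self-contained within the paper (only Lemmas~\ref{le:3.9} and \ref{le:3.10} are used), at the cost of being longer; in effect you are unpacking what \cite[Prop~2.9]{CIJTP} and \cite[Prop~3.2]{270} contain. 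The reversal of order---establishing that $E$ is an OMP before treating $B(a)$---also lets you conclude that $B(a)$ is an OMP simply from closure under $+$, rather than by re-verifying the join criterion inside $B(a)$ as the paper does.
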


\begin{proof}
It is clear that  $0,1, a \in B(a)$.

We show that if $b \in B(a)$, then $b'\in B(a)$. This follows from the criterion  (ii) of Theorem \ref{le:3.10}.

Assume $b_1,b_2 \in B(a)$ and let $b_1+b_2$ exist in $E$. By Lemma \ref{le:3.9}(2), $b_1$ and $b_2$ are strongly compatible, and $b_1\vee b_2=b_1+b_2$. Applying \cite[Prop 2.9]{CIJTP}, $a$ is strongly compatible with $b_1\vee b_2=b_1+b_2$.

To show that $B(a)$ is an OMP in its own right, we use the criterion \cite[Thm 2.12]{FGR} (see also \cite[Prop 1.5.6]{DvPu}) saying that an orthoalgebra is an OMP iff the existence of $b+c$ implies $b\vee a$ exists.
Thus let $b,c \in B(a)$ and let $b+c\in E$. As it was already said, Lemma \ref{le:3.9}(2), $b,c$ are strongly compatible, so that $b\vee c$ exists in $B(a)$, and $b\vee c=b+c$.

Let $E$ be monotone $\sigma$-complete and let $\{b_n\}$ be a non-decreasing sequence of elements of $B(a)$, and let $b =\bigvee_n b_n$.  By \cite[Prop 3.2]{270}, $b \stackrel{\mbox{\rm c}}{\longleftrightarrow}a$, proving $b\in B(a)$.

To establish that $E$ is an OMP, we use the fact that if we set $a=0$, then $B(a)= E$, and therefore, as it was just proved, $E$ is an OMP.
\end{proof}

We remark that an effect algebra with RIP is not necessarily an orthomodular poset. Indeed, if an effect algebra $E$ has RDP (so it has RIP) and by \cite{Rav}, $E$ is isomorphic to an interval effect algebra $\Gamma(G,u)$ and these effect algebras are not orthomodular posets, in general.

The following result follows directly from Lemma \ref{le:3.11}.

\begin{theorem}\label{th:3.12}
Every monotone $\sigma$-complete orthoalgebra with {\rm RIP} satisfies {\rm OEP}.
\end{theorem}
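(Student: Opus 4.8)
The plan is to reduce the statement to item (iv) of the list above, namely to the fact that every $\sigma$-orthocomplete orthomodular poset satisfies OEP, proved in \cite[Thm 3.8]{DvKu}. So the whole argument consists of recognizing that a monotone $\sigma$-complete orthoalgebra $E$ with RIP is such a poset.

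First I would invoke Lemma \ref{le:3.11}: every orthoalgebra with RIP is an orthomodular poset. Hence the given $E$, being a monotone $\sigma$-complete orthoalgebra with RIP, is in particular an orthomodular poset. Next I would observe that a monotone $\sigma$-complete orthomodular poset is automatically $\sigma$-orthocomplete. Indeed, let $(a_n)$ be a summable sequence of elements of $E$. By the definition of summability, each partial sum $s_n := a_1 + \cdots + a_n$ is defined in $E$, and since $s_n \le s_{n+1}$ for every $n$, the element $\bigvee_n s_n$ exists in $E$ by monotone $\sigma$-completeness; this element is precisely the sum $\sum_n a_n$. Thus every countable orthogonal family in $E$ has a sum, i.e. $E$ is $\sigma$-orthocomplete. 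Applying \cite[Thm 3.8]{DvKu} to the $\sigma$-orthocomplete orthomodular poset $E$ then yields that $E$ satisfies OEP.

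There is essentially no obstacle here: the substantive content of the theorem is already carried by Lemma \ref{le:3.11}, which upgrades RIP + orthoalgebra to an orthomodular poset, and the only routine point to check is that monotone $\sigma$-completeness forces $\sigma$-orthocompleteness, which is the partial-sum argument above. The one thing to be mildly careful about is that ``summable'' for a sequence means every finite subsystem is summable, which is exactly what makes the partial sums $s_n$ well defined and monotone, so that monotone $\sigma$-completeness applies directly.
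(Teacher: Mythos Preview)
Your proposal is correct and follows essentially the same route as the paper: invoke Lemma~\ref{le:3.11} to upgrade the orthoalgebra with RIP to an orthomodular poset, note that monotone $\sigma$-completeness is the same as $\sigma$-orthocompleteness in this setting, and then apply \cite[Thm 3.8]{DvKu}. The paper simply asserts the equivalence of monotone $\sigma$-completeness and $\sigma$-orthocompleteness in one phrase, whereas you spell out the partial-sum argument, but that is just an expansion of the same step.
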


\begin{proof}
Let $E$ be a monotone $\sigma$-complete orthoalgebra with RIP. By Lemma \ref{le:3.11}, $E$ is a monotone $\sigma$-complete orthomodular poset, equivalently, $E$ is a $\sigma$-orthocomplete orthomodular poset. In view of \cite[Thm 3.8]{DvKu}, $E$ has OEP.
\end{proof}

We note that that there are OMPs where RIP fails.  Indeed, let $\Omega=\{1,\ldots,8\}$ and let $E$ be the system of all subsets of $\Omega$ with even number of elements. Then $E$ is an OMP and since $\{1,2\},\{1,3\} \subseteq \{1,2,3,4\}, \{1,2,3,5\}$, RIP fails. On the other side, every finite OMP with RIP is a lattice, and there are also  infinitely-countable  OMP's with RIP that are not lattices, see \cite[Thm 2.3(ii)]{DvPt}.

The question whether every monotone $\sigma$-complete orthoalgebra belongs to the class $\mathcal{OEP}(EA)$ is still open. If each block of a monotone $\sigma$-complete orthoalgebra would be a monotone $\sigma$-complete sub-orthoalgebra, then the answer to the first question would be positive. Unfortunately, also the answer to the second question is unknown, see also \cite[Sec 3, p. 3311]{Sch}.

\section{Conclusion}

The Riesz Decomposition Property is an important notion that allows us to show that effect algebras with RDP are in fact interval ones in some unital Abelian po-groups with interpolation. A more general notion are homogeneous effect algebras.

Observables model quantum mechanical measurements. They are defined as $\sigma$-homomorphisms from the set of Borel sets on $\mathbb R$ preserving order, negations, addition, and monotone limits. We have studied a problem when a given system of elements, called a spectral resolution, of a monotone $\sigma$-complete effect algebra implies the existence of the corresponding observable. Theorem \ref{th:3.3} says that every representable monotone $\sigma$-complete effect algebra has this property, and Theorem \ref{th:3.12} describes an analogous result for monotone $\sigma$-complete effect algebras with the Riesz Interpolation Property.

In Theorem \ref{th:3.5} we showed that the set of sharp elements of a homogeneous monotone $\sigma$-complete effect algebra is always a monotone $\sigma$-complete sub-effect algebra. This result was applied in Proposition \ref{pr:4.11} to sharp observables showing their characterization by the corresponding spectral resolution whose elements are only sharp elements. In addition, in Lemma \ref{le:3.11} we showed that every orthoalgebra with RIP is in fact an OMP.

\end{document}